\newtheorem*{Tracking Set Condition}{Tracking Set Condition}
\newtheorem{Reduction Rule}{Reduction Rule}
\newtheorem{Observation}{Observation}
\newtheorem{theorem}{Theorem}
\newtheorem{lemma}{Lemma}
\newtheorem{corollary}{Corollary}
\newcommand{\prob}[3]{
  \vspace{2mm}
\noindent\fbox{
  \begin{minipage}{0.96\textwidth}
  \vspace{-1mm}
  \begin{tabular*}{\textwidth}{@{\extracolsep{\fill}}lr} \textsc{#1} 
  \vspace{-1mm} 
\\ \end{tabular*}
  {\bf{Input:}} #2  
  
  {\bf{Output:}} #3
  \end{minipage}
  }
  \vspace{2mm}
}
\newcommand{\pprob}[3]{
  \vspace{2mm}
\noindent\fbox{
  \begin{minipage}{0.96\textwidth}
  \vspace{-1mm}
  \begin{tabular*}{\textwidth}{@{\extracolsep{\fill}}lr} \textsc{#1} 
  \vspace{-1mm} 
\\ \end{tabular*}
  {\bf{Input:}} #2  
  
  {\bf{Question:}} #3
  \end{minipage}
  }
  \vspace{2mm}
}
\newcommand{\stgraph}{{$s$-$t$ graph}\xspace}
\newcommand{\stpath}{{$s$-$t$ path}\xspace}
\newcommand{\sstpath}{{shortest $s$-$t$ path}\xspace}
\newcommand{\sstpaths}{{shortest $s$-$t$ paths}\xspace}
\newcommand{\stpaths}{{$s$-$t$ paths}\xspace}
\newcommand{\trs}{{tracking set}\xspace}
\newcommand{\hs}{{hitting set}\xspace}
\newcommand{\hsp}{{\sc Hitting Set}\xspace}
\newcommand{\dhs}{{\sc $d$-Hitting Set}\xspace}
\newcommand{\tsc}{\textit{tracking set condition}\xspace}
\newcommand{\tp}{{\sc Tracking Paths}\xspace}
\newcommand{\ddtsp}{{\sc $diam$-$d$-Tracking Shortest Paths}\xspace}
\newcommand{\tsp}{{\sc Tracking Shortest Paths}\xspace}
\newcommand{\tss}{{\sc Tracking Set System}\xspace}
\newcommand{\dts}{{\sc $d$-Tracking Set}\xspace}
\newcommand{\tcpr}{{\sc Test Cover}\xspace}
\newcommand{\tc}{{test cover}\xspace}
\newcommand{\fvs}{{FVS}\xspace}
\newcommand{\tpdag}{{\sc Tracking Paths in DAGs}\xspace}
\newcommand{\dc}{{\sc Discriminating Code}\xspace}
\newcommand{\ivcp}{{\sc Identifying Vertex Cover}\xspace}
\newcommand{\NP}{\text{\normalfont  NP}\xspace}
\newcommand{\NPH}{\text{\normalfont  NP}-hard\xspace}
\newcommand{\FPT}{\text{\normalfont FPT}\xspace}
\newcommand{\W}[1][xxxx]{\text{\normalfont W[#1]}\xspace}
\newcommand{\APX}{{\sc APX}\xspace}
\newcommand{\Oh}{\mathcal{O}}
\journal{Journal of Theoretical Computer Science}
\begin{document}

\begin{frontmatter}

\title{Fixed-parameter tractable algorithms for Tracking Shortest Paths \tnoteref{mytitlenote}}
\tnotetext[mytitlenote]{Under review in a journal. A preliminary version of the paper appeared in the proceedings of CALDAM 2018~\cite{caldam2018}}


\author{Aritra Banik}
\address{National Institute of Science Education and Research Bhubaneswar, HBNI, India.}
\ead{aritra@niser.ac.in}

\author{Pratibha Choudhary}
\address{Indian Institute of Technology Jodhpur, Jodhpur, India.}
\ead{pratibhac247@gmail.com}

\author{Venkatesh Raman}
\address{The Institute of Mathematical Sciences, HBNI, Chennai, India.}
\ead{vraman@imsc.res.in}

\author{Saket Saurabh}
\address{The Institute of Mathematical Sciences, HBNI, Chennai, India.}
\ead{saket@imsc.res.in}

\setcounter{footnote}{0}
\begin{abstract}
We consider the parameterized complexity of the problem of tracking \sstpaths in graphs, motivated by applications in security and wireless networks. 
Given an undirected and unweighted graph with a source $s$ and a destination $t$, \tsp asks if there exists a $k$-sized subset of vertices (referred to as \textit{tracking set}) that intersects each \sstpath in a distinct set of vertices. 

We first generalize this problem for set systems, namely \tss, where given a family of subsets of a universe, we are required to find a subset of elements from the universe that has a unique intersection with each set in the family. \tss is shown to be fixed-parameter tractable due to its relation with a known problem, \tcpr. By a reduction to the well-studied $d$-hitting set problem, we give a polynomial (with respect to $k$) kernel for the case when the set sizes are bounded by $d$. This also helps in solving \tsp when the input graph diameter is bounded by $d$.

While the results for \tss show that \tsp is fixed-parameter tractable, we also give an independent algorithm  by using some preprocessing rules, resulting in an improved running time.

\end{abstract}

\begin{keyword}
graphs, shortest \stpaths, tracking paths, fixed-parameter tractable, kernel, set systems. 
\end{keyword}

\end{frontmatter}


\section{Introduction and Motivation}
In this paper, we consider the parameterized complexity of the problem of tracking \textit{shortest $s$-$t$ paths} in graphs and some related versions of the problem. Given a graph with a specified source $s$ and a destination $t$, a simple path between $s$ and $t$ is referred to as an \textit{$s$-$t$ path}, and a shortest simple path between $s$ and $t$ is referred to as a \textit{shortest $s$-$t$ path}. In \tsp problems, the goal is to find a  small subset of vertices that can help uniquely identify all \sstpaths in a graph.

We start with some motivation for the problem.
Consider the security system at a large airport. As a security measure, it is required to identify the routes taken by passengers across the airport from entry to departure or from arrival to exit. A set of carefully chosen security scan points can be selected as identification points to trace the movements of passengers. A similar scenario can arise in any other secure facilities where movement of entities need to be tracked. Note that in practical scenarios, often it is resourceful to use the \sstpaths available. 

Other major application scenarios are tracking of moving objects in telecommunication networks and  road networks. The goal can be efficient and optimized tracking of objects, for the purpose of surveillance, monitoring, intruder detection, and operations management. 
The solution to the problem can then be used for reconstruction of path traced by an object in order to detect potential network flaws, to study traffic patterns of moving objects, to optimize network resources based on such patterns, and for other such network analysis based tasks.

Tracking of moving objects has been studied in the field of wireless sensor networks. See~\cite{survey} for a survey of target tracking protocols using wireless sensor networks. Some researchers have studied this with respect to power management of sensors~\cite{power}. Despite being an active area of research, a major part of this research so far has been based on heuristics.
In~\cite{banik}, the authors formalized the problem of tracking in networks as a graph theoretic problem and did a systematic study. Among other problems, they introduced the following optimization problem. 
$V(P)$ is used to denote the set of vertices in path $P$. A graph with a unique source $s$ and unique destination $t$ is called an $s$-$t$ graph.



\prob{\tsp}{An undirected \stgraph $G=(V,E)$.}{A minimum set of vertices ${T} \subseteq V$, such that for any two distinct \sstpaths $P_1$ and $P_2$ in $G$, it holds that ${T} \cap V(P_1) \neq {T} \cap V(P_2)$.}

The output set of vertices is referred to as a \textit{tracking set} and the vertices in a \textit{tracking set} are called \textit{trackers}.
In~\cite{banik}, \tsp was shown to be {\sc NP}-hard for undirected graphs and a $2$-approximate algorithm was given for the case of planar graphs. An $\alpha$-approximation algorithm for a minimization problem gives a solution that is at most $\alpha$ times the size of an optimum solution, in time polynomial in the input size.

\tsp can be generalized to the case where not just the \sstpaths, but all \stpaths in a graph need to be identified uniquely by a minimum subset of vertices. For a set of vertices $T\subseteq V$ and a path $P$, $\Pi_P(T)$ denotes the sequence in which the vertices from $V(P)\cap T$ appear in path $P$. Formally the problem of tracking all \stpaths in a graph is defined as follows.

\prob{\tp}{An \stgraph $G=(V,E)$.}{A minimum set of vertices ${T} \subseteq V$, such that for any two distinct \stpaths $P_1$ and $P_2$ in $G$, it holds that $\Pi_{P_1}({T}) \neq \Pi_{P_2}({T})$.}

For a graph $G$, solving \tp requires finding a \trs that intersects all \stpaths in a unique sequence. 
Note that if we consider only \sstpaths, a pair of vertices cannot appear in different sequence in two distinct \sstpaths, as this would mean that at least one of the paths is not a \sstpath in the graph. Hence in case of \tsp, it is sufficient to find a \trs that intersects each \sstpath in a unique set of vertices.

It has been proven that \tp is {\sc NP}-hard for undirected graphs, and admits a polynomial kernel when parameterized by $k$, the size of the \trs~\cite{latin-journal-2019}. 
In parameterized complexity, a (polynomial) \textit{kernel} is an equivalent instance whose size is a (polynomial) function of a parameter $k$, where $k$ is either the size of the output or some other integer related to the input instance such that $k$ is preferably very small compared to the input size. Later this result was improved in~\cite{quadratic} by showing the existence of an $\mathcal{O}(k^2)$ kernel for undirected graphs and an $\mathcal{O}(k)$ kernel for undirected planar graphs. See Section~\ref{subsec:fpt} for details on \FPT and kernels. 

Observe that for a graph $G$, a \trs for all \stpaths is also a \trs for all \sstpaths. However, it may be the case that $G$ does not have a \trs of size at most $k$ for all \stpaths, but it might still have a \trs of size at most $k$ for all \sstpaths. Hence the parameterized complexity of \tsp is a problem of independent interest.

The key idea behind the kernel for \tp in~\cite{latin-journal-2019},~\cite{quadratic} originates from the fact that for an undirected graph $G$, a \trs for all \stpaths is also a feedback vertex set (\fvs) for $G$. A \textit{feedback vertex set} for a graph $G$ is a set of vertices whose removal makes $G$ acyclic. However, a \trs for all \sstpaths in a graph need not be a \fvs. 
In this paper we address the parameterized complexity of \tsp along with its restricted version \ddtsp. \ddtsp requires finding a \trs for distinguishing between \sstpaths in a graph whose diameter is restricted to $d$. We show that \ddtsp is \NP-hard and \FPT.

We first study a combinatorial version of \tsp, which is \tss, in Section~\ref{sec:setsystem}.
A set system is a pair $\mathcal{P}=\{X,\mathcal{S}\}$, where $X$ is a finite set and $\mathcal{S}$ is a family of subsets of $X$.
For a set system, a \trs is a set of elements that has a unique intersection with each of the subsets in the family. \tss is formally defined as follows.


\prob{\tss}{A set system $\mathcal{P}=\{X,\mathcal{S}\}$.}{A minimum cardinality set ${T} \subseteq X$, such that for any two distinct $S_i,S_j\in S$, it holds that $S_i\cap T\neq S_j\cap T$.}

Here the elements in a \trs are referred as \textit{trackers}. 


\tss has some resemblance to the well known \hsp problem. For a set system $(U,\mathcal{F})$ comprising of a finite universe $U$ and a collection $\mathcal{F}$ of subsets of $U$, a {\em hitting set} is a set $H \subseteq U$ that has a non-empty intersection with each set in $\mathcal{F}$, and the optimization version of \hsp requires finding a minimum cardinality \hs. Observe that while \hs is a set of elements that is required to have a non-empty intersection with each of the sets in the set system family, a \trs is required to have a unique intersection with each of the sets in the family.
\hsp was one of Karp's original NP-complete problems~\cite{karp}.


We first study \dts, which is a restricted version of \tss where the size of each subset in the family is restricted to $d$. We show \dts to be \NPH by showing a connection with the problems \ivcp and \textsc{Packing}~\cite{henning-identifying-vertex-covers},~\cite{chain-packing}. We then give a \textit{compression} for \dts by showing a reduction from \dts to the \dhs problem. \dhs is a restricted version of \hsp where the set sizes in the family are restricted to $d$. 
\textit{Compression} of a parameterized problem $X$ into a problem $Y$ is an algorithm that takes as input an instance $(x,k)$ of $X$, works in time polynomial in $|x|+k$, and returns a problem instance $y$, such that $|y|\leq p(k)$ for some polynomial $p(\cdot)$, and, $y$ is a YES instance of $Y$ if and only if $(x,k)$ is a YES instance of $X$. Since \dhs is in \NP and \dts is \NP-hard, there exists a polynomial reduction from \dhs to \dts as well. This gives a kernelization result for \dts.
While the reduction still works for the unrestricted version, it does not help to resolve the parameterized complexity of \tss when the set sizes are unrestricted as general \hsp is known to be hard for the parameterized complexity class {\sc W}[2]~\cite{book}.

\tss is known to be related to \tcpr~\cite{bazgan}. \tcpr requires finding a subfamily of sets in a set system, that can help identify each element in the universe uniquely by inclusion. Using known results about \tcpr~\cite{Crowston2012}, we show that the size of a \trs for a set system with $n$ elements and $m$ sets is at least $\lceil \lg m \rceil$\footnote{We use $\lg$ to denote logarithm to the base $2$}. This, along with some reduction rules, leads to the result that the problem of determining whether a given set system has a \trs of size at most $k$ has a \FPT algorithm running in time\footnote{$\mathcal{O}^*$ notation ignores the polynomial factors in terms of the size of input $n$} $\mathcal{O}^*(2^{k2^k})$.

We then consider other natural parameterizations of \tss and give \FPT algorithms and hardness results that follow from the equivalence to \tcpr.


In Section~\ref{sec:tssp}, we consider the parameterized complexity of \tsp problem. We study \tsp along with a restricted version of it i.e. \ddtsp problem. 
\ddtsp requires finding a \trs for \sstpaths when the input graph has \textit{diameter} at most $d$. Using results from Section~\ref{sec:setsystem} and~\cite{banik}, we first prove that both these problems are \NPH and admit \FPT algorithms. Then in Section~\ref{subsec:improvedfpt}, we introduce the \tpdag problem which requires finding a \trs for all (directed) \stpaths in a directed acyclic graph (DAG). We give an improved fixed-parameter tractable algorithm for \tsp by first reducing it to \tpdag, and then giving a kernel for \tpdag.

The following table gives a summary of our results in this paper.

\noindent
\begin{tabular}{ |l|l|l|l| } 
\hline
\textbf{Problem} & \textbf{Kernel} & \textbf{FPT} & \textbf{Section} \\
 \hline
 \hline
\dts & Polynomial &  $\Oh^*(c^k)$ & \ref{subsec:d-tracking-set} \\
\tss  & $\Oh(2^{2^k})$ &  $\Oh^*(2^{k2^k})$ & \ref{subsec:tracking-set-systems} \\
\ddtsp & Polynomial & $\Oh^*(c^k)$ & \ref{subsec:diam-d-tp} \\
\tsp & $\Oh(2^{k})$  & $\Oh^*(2^{k^2+3k})$ & \ref{subsec:improvedfpt} \\
\tpdag & $\Oh(2^{k})$  & $\Oh^*(2^{k^2+3k})$ & \ref{subsec:improvedfpt} \\
 \hline
\end{tabular}

Polynomial indicates polynomial in $k$ for a fixed $d$, and $c$ is a function polynomial with respect to $d$.

\subsection{Related Work}

\tss has been studied earlier under the problem name \textsc{Distinguishing Transversals in Hypergraphs}~\cite{henning-cubic}. 
Some closely related graph theoretic problems are \dc~\cite{discriminating-codes} and \textsc{Identifying Codes}~\cite{id-codes-1},~\cite{id-codes-2},~\cite{julien-thesis}. 
\textsc{Distinguishing Transversals} when restricted to 2-uniform hypergraphs is equivalent to \textsc{Identifying Vertex Cover}, which is the problem of finding a set of vertices $V'\subseteq V$ for a graph $G=(V,E)$, such that $a\cap V'\neq b\cap V'$, for a pair of distinct edges $a,b\in E$. Henning and Yeo give some bounds for the size of an output in~\cite{henning-identifying-vertex-covers} and~\cite{henning-cubic} for \textsc{Identifying Vertex Cover} and \textsc{Distinguishing Transversal}, respectively.

Recently Eppstein et al. proved \tp in planar graphs to be \NP-hard and gave a $4$-approximation algorithm for the same~\cite{eppstein}. In~\cite{guido-cubic}, Bil{\`o} et al. show that \textsc{Tracking Shortest Paths} is NP-hard for cubic planar graphs in case of multiple source-destination pairs, and give an \textsc{FPT} algorithm parameterized by the number of vertices equidistant from the source or destination. Further, some polynomial time algorithms have been given to solve \tp in some restricted classes of graphs~\cite{iwoca,polytime-arxiv}.

\section{Preliminaries}
\label{sec:prelim}


Throughout this paper, we assume that each graph is an $s$-$t$ graph with $s$ and $t$ already given to us. $V(G)$ denotes the vertex set of graph $G$ and $E(G)$ denotes the edges whose both endpoints belong to $V(G)$. We use DAG to denote directed acyclic graph. For vertices $u,v\in V(G)$ where $G$ is an undirected graph, $uv \in E(G)$ denotes an edge between vertices $u$ and $v$. For vertices $a,b\in V(G)$ where $G$ is directed graph, $(a,b) \in E(G)$ denotes a directed edge between vertices $a$ and $b$, oriented from $a$ towards $b$. Given a graph $G=(V,E)$, $G-e$ denotes the graph induced by removing the edge $e\in E$ from $G$, i.e. $G(V,E\setminus e)$. For a vertex $v\in V(G)$, neighborhood of $v$ is denoted by $N(v)$, and $N(v)=\{u \mid uv\in E(G)\}$. The degree of a vertex $v$ is denoted by $deg(v)=|N(v)|$. $N^+(v)$ denotes the set of out-neighbors of vertex $v$ i.e. $N^+(v) = \{u \mid (v,u)\in E(G)\}$ and $N^-(v)$ denotes the set of in-neighbors of $v$ i.e. $N^-(v)=\{w\mid (w,v)\in E(G)\}$. The out-degree of a vertex $v$ is equal to $|N^+(v)|$ and is denoted by $deg^+(v)$ and in-degree is equal to $|N^-(v)|$ and is denoted by $deg^-(v)$. For a vertex $v$ in a directed graph, the degree of $v$ is $deg(v)=deg^+(v)+deg^-(v)$ and the neighborhood of $v$ is $N(v)=N^+(v)\cup N^-(v)$. \textit{Short-circuiting} a vertex of degree two means deleting the vertex and introducing an edge between its neighbors.

A path is a sequence of vertices where subsequent vertices are connected by an edge. We only consider simple paths in this paper i.e. paths that do not repeat vertices. $V(P)$ is used to denote the vertex set of path $P$. For vertices $a,b\in V$, an $a$-$b$ path means a path between vertices $a$ and $b$.
If there exists a path $P_1$ between vertices $u$ and $v$, and there exists another path $P_2$ between vertices $v$ and $w$, we use $P_1\cdot P_2$ to denote the path between $u$ and $w$ obtained by concatenation of paths $P_1$ and $P_2$ at $v$. 
The length of a path is equal to the number of edges in that path. The $distance$ between two vertices $x,y\in V(G)$, denoted by $dis(x,y)$, is the length of the shortest $x$-$y$ path in $G$. The greatest distance between any two vertices in $G$ is the $diameter$ of $G$, denoted by $diam(G)$.

We use the term \textit{unrestricted} as an attribute for a problem when there are no restrictions on the input.


\subsection{Fixed-parameter tractability}
\label{subsec:fpt}
A {\em parameterized problem} is a language $L \subseteq \Sigma^* \times \mathbb{N}$, where $\Sigma$ is a fixed, finite alphabet. For an instance $(x,k) \in \Sigma^* \times \mathbb{N}, k$ is called the {\em parameter}. A parameterized problem $L \subseteq \Sigma^* \times \mathbb{N}$ is called {\em fixed-parameter tractable} (\FPT) if there exists an algorithm $\mathcal{A}$ (called a {\em fixed-parameter algorithm}), a computable function $f: \mathbb{N} \rightarrow \mathbb{N}$, and a constant $c$ such that, given $(x,k) \in \Sigma^* \times \mathbb{N}$, the algorithm $\mathcal{A}$ correctly decides whether $(x,k) \in L$ in time bounded by $f(k)\cdot |(x,k)|^c$. The complexity class containing all fixed-parameter tractable problems is called \FPT. There is also an associated hardness hierarchy and the basic hardness classes are \W[1] and \W[2]. The {\sc clique} problem (does the given graph have a clique of size at least $k$ ?) is a canonical complete problem for \W[1] while the  {\sc dominating set} problem (does the given graph have a dominating set of size at most $k$?) is a canonical complete problem for \W[2]. We refer to~\cite{book} for more details on parameterized complexity.

Let $A, B \subseteq \Sigma^* \times \mathbb{N}$ be two parameterized problems. A {\em parameterized reduction} from $A$ to $B$ is an algorithm that, given an instance $(x,k)$ of $A$, outputs an instance $(x',k')$ of $B$ such that
\begin{enumerate}
\item 
$(x,k)$ is a YES instance of $A$ if and only if $(x',k')$ is a YES instance of $B$,
\item
$k' \leq g(k)$ for a computable function $g$, and
\item
the running time of the algorithm is $f(k)\cdot |x|^{\mathcal{O}(1)}$ for a computable function $f$.
\end{enumerate}

A \textit{polynomial compression} of a parameterized language $Q\,\subseteq\,\Sigma\times \mathbb{N} $ into a language $R\subseteq \Sigma^*$ is an algorithm that takes as input an instance $(x,k) \in \Sigma^*\times\mathbb{N}$, works in polynomial time in $|x|+k$, and returns a string $y$ such that:

$(a)$ $|y|\leq p(k)$ for some polynomial $p(.)$, and 

$(b)$ $y\in R$ if and only if $(x,k)\in Q$. 

\noindent
A {\em kernelization algorithm} is a polynomial-time algorithm that transforms an arbitrary instance of the problem to an equivalent instance (known as {\em kernel}) of the same problem, such that the size of the new instance is bounded by some computable function $g$ of the parameter of the original instance. Kernelization typically involves applying a set of rules (called {\em reduction rules}) to the given instance.
A reduction rule is a rule that translates a given instance into another. The rule is said to be {\it safe} if the reduced instance is equivalent to the original instance in the sense that the reduced instance is a YES instance if and only if the original instance is a YES instance. Unless otherwise specified, we use \textit{polynomial time} to denote a running time that is a polynomial function of the input size. 

\section{Tracking Set Systems}
\label{sec:setsystem}

In this section we study generalized versions of the \tsp problem, i.e. \tss problem. For a set system $\mathcal{P}=\{X,\mathcal{S}\}$, a \trs is a subset of elements $T \subseteq X$, that has a unique intersection with each set in the family $\mathcal{S}$ i.e. $ T \cap S_i \neq T \cap S_j$,  $\forall S_i,S_j \in \mathcal{S}$ (where $i \neq j$). For the remainder of this section, unless otherwise specified, by \trs we mean \trs for set systems.
We first consider a restricted version of \tss wherein the size of the sets in family $\mathcal{S}$ is limited to $d$, which is referred as \dts.

\subsection{$d$-Tracking Set}
\label{subsec:d-tracking-set}

In this section we give a kernel and an \textsc{FPT} algorithm for a restricted version of \tss wherein the size of each set in the family is restricted to $d$. We formally define the problem as follows.

\prob{\dts $(X,\mathcal{S},d,k)$}{A set system $(X,\mathcal{S})$, such that $\forall S\in\mathcal{S}, |S|\leq d$; parameter =$k$.}{A set ${T} \subseteq X$ where $|T|\leq k$, such that for any two distinct $S_i,S_j\in S$, it holds that $S_i\cap T\neq S_j\cap T$, if it exists.}

\noindent


When $d=2$, \dts is the same as \ivcp~\cite{henning-identifying-vertex-covers}. It is known that \ivcp is related to \textsc{Packing}, which involves finding a maximum set of disjoint packing of paths of length at least four in a graph~\cite{julien-thesis}. \textsc{Packing} is formally defined as follows.

\prob{\textsc{Packing}$(G,k)$}{A graph $G=(V,E)$.}{A maximum cardinality set $\mathcal{P}$ of paths of length at least four, such that for any two distinct paths $P_1,P_2\in \mathcal{P}$, it holds that $V(P_1)\cap V(P_2)=\emptyset$, and $\bigcup_{P\in\mathcal{P}} V(P) = V$.}

It is known from~\cite{chain-packing} that \textsc{Packing} is \NP-hard, and it is known that there exists a polynomial time reduction from \textsc{Packing} to \ivcp~\cite{henning-identifying-vertex-covers}. Hence we have the following lemma.


%


%
%

\begin{lemma}
\label{lem:dts-hard-2}
\dts is \NP-hard for $d=2$.
\end{lemma}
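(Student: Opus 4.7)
The plan is essentially to compose the chain of reductions already flagged in the paragraph preceding the lemma, so the proof should be short and organizational rather than technical.

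First, I would establish the equivalence between \dts restricted to $d=2$ and \ivcp. Given a $d=2$ instance $(X,\mathcal{S})$ in which (without loss of generality, after trivial preprocessing of singletons and empty sets) every $S\in\mathcal{S}$ satisfies $|S|=2$, construct the graph $G=(X,\mathcal{S})$ whose vertex set is $X$ and whose edge set is exactly $\mathcal{S}$. A set $T\subseteq X$ is a tracking set for $(X,\mathcal{S})$ iff for every pair of distinct edges $e_1,e_2\in E(G)$ we have $T\cap e_1 \neq T\cap e_2$, which matches the definition of an identifying vertex cover recalled in the Related Work subsection. The converse direction is symmetric, since any graph instance of \ivcp is already of the desired form $(V(G),E(G))$ with all sets of size $2$. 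This correspondence is clearly computable in linear time and preserves the solution size.

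Next, I would import the two cited ingredients: by the result of~\cite{chain-packing}, \textsc{Packing} (maximum vertex-disjoint packing of paths of length at least four covering all vertices) is \NP-hard, and by~\cite{henning-identifying-vertex-covers} there is a polynomial-time reduction from \textsc{Packing} to \ivcp. Composing this reduction with the equivalence above yields a polynomial-time many-one reduction from \textsc{Packing} to \dts with $d=2$, which proves \NP-hardness.

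The only potential obstacle is making sure the equivalence with \ivcp is truly clean in both directions. I would spell out two small points: (i) if $\mathcal{S}$ contains a set of size $0$ or two copies of the same set, the instance is trivially a NO-instance and can be detected in polynomial time; (ii) sets of size $1$, if present, can either be discarded after adding an arbitrarily chosen incident element to $T$ or rewritten as edges to a new private vertex, so that every set in the reduced instance has size exactly $2$. Since the reduction from \textsc{Packing} in~\cite{henning-identifying-vertex-covers} produces a graph (hence an instance with $|S|=2$ for every $S$), these subtleties in fact never arise in the composed reduction, and the \NP-hardness claim follows directly.
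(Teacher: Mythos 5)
Your proposal is correct and follows essentially the same route as the paper: identify the $d=2$ case of \dts with \ivcp, then invoke the \NP-hardness of \textsc{Packing}~\cite{chain-packing} together with the known polynomial-time reduction from \textsc{Packing} to \ivcp~\cite{henning-identifying-vertex-covers}. Your extra care about empty sets, singletons, and duplicates (and the observation that these never arise in the composed reduction) only makes the argument more explicit than the paper's.
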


Consider an instance $(X,\mathcal{S},d,k)$ of \dts where $d=2$. Let $d'\geq 3$ be an integer. We introduce additional $d'-2$ dummy elements in $X$, and add those dummy elements to all the sets in the family $\mathcal{S}$. Let $(Y,\mathcal{S}',d',k)$ be the new instance obtained. All the sets in the family $\mathcal{S}'$ are of size $d'$. Since the new elements in $Y$ are common in all the sets in $\mathcal{S}'$, in order to distinguish between the sets in $\mathcal{S}'$, we necessarily need to distinguish the sets in $\mathcal{S}$ and vice-versa. Thus \dts for $d=2$ can be reduced to general \dts for any value of $d$. Further, any instance of \tss is also an instance of \dts. Hence we have the following lemma.

\begin{lemma}
\label{lem:dts-hard}
\dts (for any $d\geq 2$) and \tss are \NP-hard.
\end{lemma}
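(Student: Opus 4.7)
The plan is to bootstrap from Lemma~\ref{lem:dts-hard-2} via a simple padding reduction, following the sketch already given in the paragraph preceding the statement. Given an arbitrary instance $(X,\mathcal{S},2,k)$ of \dts, I would, for a target $d\geq 3$, introduce a fresh set $D=\{y_1,\ldots,y_{d-2}\}$ disjoint from $X$, set $Y=X\cup D$, and define $\mathcal{S}'=\{S\cup D : S\in\mathcal{S}\}$, keeping the parameter $k$ unchanged. Each set in $\mathcal{S}'$ has size exactly $d$, so $(Y,\mathcal{S}',d,k)$ is a valid instance of \dts, and the construction is clearly polynomial-time.

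The correctness hinges on one observation: since every element of $D$ lies in every set of $\mathcal{S}'$, the elements of $D$ contribute identically to the trace of every set and are useless for distinguishing. More precisely, for any $S_i,S_j\in\mathcal{S}$ and any $T\subseteq Y$, one has $(S_i\cup D)\cap T = (S_j\cup D)\cap T$ if and only if $S_i\cap (T\setminus D) = S_j\cap (T\setminus D)$. Consequently, if $T$ is a \trs for $(Y,\mathcal{S}')$ of size at most $k$, then $T\setminus D$ is a \trs for $(X,\mathcal{S})$ of size at most $k$; conversely, any \trs for $(X,\mathcal{S})$ is already a \trs for $(Y,\mathcal{S}')$. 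This establishes the equivalence of the two instances and hence \NP-hardness of \dts for every fixed $d\geq 2$.

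For \tss, I would simply observe that every instance of \dts is, verbatim, an instance of \tss, since the bound $d$ on set sizes is extra structure that a solver for the unrestricted problem is free to ignore. Thus the \NP-hardness of \dts (even for $d=2$) transfers immediately to \tss.

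There is no real obstacle here; the only subtle point is the "common-intersection-cancels" observation in the middle paragraph, and I would write it out explicitly to make clear that the padded dummies genuinely do not change the set of feasible tracking sets (up to discarding irrelevant dummy elements).
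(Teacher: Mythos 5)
Your proposal is correct and follows essentially the same route as the paper: pad each size-$2$ set with $d-2$ fresh common dummy elements and observe that elements common to all sets cannot help distinguish them, then note that any \dts instance is a \tss instance. The explicit biconditional $(S_i\cup D)\cap T=(S_j\cup D)\cap T \iff S_i\cap(T\setminus D)=S_j\cap(T\setminus D)$ is a welcome sharpening of the paper's informal justification.
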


Next we give a reduction from \dts to the well known $d$-\hsp problem. For a fixed integer $d>0$, given a set system $(U,\mathcal{F})$ with each set in $\mathcal{F}$ consisting of $d$ elements, parameterized version of $d$-\hsp requires finding a hitting set of size at most $k$. 

\begin{lemma}\label{lemma:red-dts-dhs}
Let $\mathcal{P}_1=(X,\mathcal{S},d,k)$ be an instance of \dts. Then there exists an instance $\mathcal{P}_2=(U,\mathcal{F},2d,k)$ of \dhs such that $\mathcal{P}_1$ has a tracking set of size $k$ if and only if $\mathcal{P}_2$ has a hitting set of size $k$.
\end{lemma}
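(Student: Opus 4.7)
The plan is to build $\mathcal{P}_2$ directly out of the pairwise symmetric differences of sets in $\mathcal{S}$. Concretely, I would set $U = X$ and
\[
\mathcal{F} \;=\; \bigl\{\, S_i \triangle S_j \;:\; S_i, S_j \in \mathcal{S},\; i \neq j \,\bigr\},
\]
where $S_i \triangle S_j = (S_i \setminus S_j) \cup (S_j \setminus S_i)$. Since each $|S_i|, |S_j| \leq d$, every set in $\mathcal{F}$ has size at most $2d$, so this indeed yields a $2d$-\hsp instance. The construction is polynomial-time because $|\mathcal{F}| \leq \binom{|\mathcal{S}|}{2}$.

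The crux of the argument is the elementary observation that a set $T \subseteq X$ distinguishes $S_i$ from $S_j$ (in the sense $T \cap S_i \neq T \cap S_j$) if and only if $T$ contains some element of $S_i \triangle S_j$. Given this, I would prove both directions in parallel: for the forward direction, if $T$ is a tracking set of size at most $k$, then for every pair $i \neq j$ the set $T \cap S_i$ differs from $T \cap S_j$, so $T$ must contain a witness in $S_i \triangle S_j$; hence $T$ hits every member of $\mathcal{F}$. Conversely, if $H \subseteq X$ of size at most $k$ hits every $S_i \triangle S_j$, then for each pair $i \neq j$ some element of $H$ lies in exactly one of $S_i, S_j$, which forces $H \cap S_i \neq H \cap S_j$, so $H$ is a tracking set.

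The one subtle point—and the main thing I would pay attention to—is the empty-symmetric-difference case. If $\mathcal{S}$ is a genuine family of \emph{distinct} subsets of $X$, then $S_i \triangle S_j \neq \emptyset$ whenever $i \neq j$, and the reduction is clean. If instead the family is treated as a multiset (i.e.\ two indices can carry the same set), then no tracking set exists for $\mathcal{P}_1$ and the construction would produce an empty set in $\mathcal{F}$, correctly making $\mathcal{P}_2$ a NO instance of hitting set; I would remark on this briefly.

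A minor technical issue is that the sets in $\mathcal{F}$ may have size strictly less than $2d$, whereas some formulations of \dhs demand sets of size exactly $d$. This is harmless and can be fixed with the same padding trick used just before the lemma: introduce fresh dummy elements outside $U$ and use them to pad each undersized set up to size $2d$; since dummy elements appear in no other set and in no tracking set, they cannot affect optimality on either side. With that, the equivalence of $k$-sized solutions between $\mathcal{P}_1$ and $\mathcal{P}_2$ follows.
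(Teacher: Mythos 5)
Your proposal is correct and follows essentially the same route as the paper: take $U=X$, let $\mathcal{F}$ consist of the pairwise symmetric differences of sets in $\mathcal{S}$, and use the equivalence ``$T\cap S_i\neq T\cap S_j$ iff $T$ hits $S_i\triangle S_j$'' to transfer solutions in both directions. Your extra remarks on the empty-symmetric-difference and padding edge cases are sensible but do not change the argument.
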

\begin{proof}
Let $\mathcal{P}_1=(X,\mathcal{S},d,k)$ be an instance of \dts. We construct an instance $\mathcal{P}_2= (U,\mathcal{F},2d,k)$ of \dhs as follows. Set $U=X$, and $\mathcal{F}=\{F_{RS} \mid F_{RS}=\{R\setminus S\}\cup \{S\setminus R\},\, R,S\in\mathcal{S},\,R\neq S \}$ i.e. the family consists of the symmetric difference of every pair of sets in $\mathcal{S}$. 
First we prove that if $T$ is a \trs for $\mathcal{P}_1$ then $T$ is a \hs for $\mathcal{P}_2$. Suppose not. Then there exists a set $F\in\mathcal{F}$ such that $T\cap F=\emptyset$. Due to the construction of $\mathcal{P}_2$, there exist two sets, say $R,S\in\mathcal{S}$, such that $F=\{R\setminus S\}\cup \{S\setminus R\}$. Since $T\cap F=\emptyset$, it follows that $T\cap \{\{R\setminus S\}\cup \{S\setminus R\}\}=\emptyset$ i.e. $T\cap \{R\setminus S\}= T\cap \{S\setminus R\}=\emptyset$, which implies that $T\cap R=T\cap S$. This contradicts the assumption that $T$ is a \trs for $\mathcal{P}_1$.

Next we prove that if $H$ is a \hs for $\mathcal{P}_2$ then $H$ is a \trs for $\mathcal{P}_1$. Suppose not. Then there exists two sets $R,S\in\mathcal{S}$ such that $H\cap R = H\cap S$. Thus $H\cap \{ \{R\setminus S\}\cup\{S\setminus R\}\} = \emptyset$. Due to construction of $\mathcal{F}$ it follows that there exists a set $F\in\mathcal{F}$, $F=\{R\setminus S\}\cup\{S\setminus R\}$ and $H\cap F=\emptyset$. This contradicts the assumption that $H$ is a \hs for $\mathcal{P}_2$.
Hence the lemma holds.
\end{proof}

It is known that \dhs admits a kernel with $\Oh((2d-1)k^{d-1}+k)$ sets and elements, and an FPT algorithm running in time $\Oh^*(c^k)$ where $c=d-1+\Oh(d^{-1})$~\cite{dhs-kernel}, ~\cite{dhs-fpt}. Due to this fact and Lemma~\ref{lemma:red-dts-dhs}, we have the following lemma.

\begin{lemma}\label{lemma:hs-ts-c}
\dts admits a compression of size $\Oh((4d-1)k^{2d-1}+k)$.
\end{lemma}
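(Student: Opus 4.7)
The plan is to compose the reduction of Lemma~\ref{lemma:red-dts-dhs} with the known kernelization for \dhs cited from~\cite{dhs-kernel}. First I would apply Lemma~\ref{lemma:red-dts-dhs} to transform, in polynomial time, the input instance $(X,\mathcal{S},d,k)$ of \dts into an equivalent instance $(U,\mathcal{F},2d,k)$ of \dhs. The key observation needed here is that each set in $\mathcal{F}$ is of the form $(R\setminus S)\cup (S\setminus R)$ for some $R,S\in\mathcal{S}$ with $|R|,|S|\leq d$, so $|F_{RS}|\leq 2d$; hence the output is genuinely a $(2d)$-Hitting Set instance, and the parameter value $k$ is preserved exactly.

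Next, I would feed this $(2d)$-Hitting Set instance into the known polynomial kernelization for \dhs, which shrinks any $d'$-Hitting Set instance with parameter $k$ to one with $\Oh((2d'-1)k^{d'-1}+k)$ elements and sets. Substituting $d'=2d$ yields an equivalent instance whose size is $\Oh((2\cdot 2d-1)k^{2d-1}+k)=\Oh((4d-1)k^{2d-1}+k)$, which is exactly the bound claimed.

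Finally, I would observe that the resulting object is a \dhs instance rather than a \dts instance, so what we obtain is formally a polynomial \emph{compression} of \dts into \dhs (matching the phrasing of the lemma). Equivalence of YES/NO answers follows by chaining the two correctness statements: Lemma~\ref{lemma:red-dts-dhs} shows the original \dts instance is YES iff the intermediate \dhs instance is YES, and the \dhs kernel preserves YES/NO by definition. There is no serious obstacle in this proof; the only mildly delicate point is to record that the bound on set size doubles through the symmetric-difference construction, so the exponent in the kernel bound is $2d-1$ rather than $d-1$, and the leading coefficient becomes $4d-1$ rather than $2d-1$.
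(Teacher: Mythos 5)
Your proposal is correct and follows exactly the paper's own argument: reduce \dts to $(2d)$-\textsc{Hitting Set} via Lemma~\ref{lemma:red-dts-dhs} and then invoke the known $\Oh((2d'-1)k^{d'-1}+k)$ kernel for $d'$-\textsc{Hitting Set} with $d'=2d$. Your explicit remark that the output is a \dhs instance, hence a compression rather than a kernel, is precisely the distinction the paper draws before upgrading to a kernel in Theorem~\ref{theorem:hs-ts}.
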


Observe that \dhs is in \NP, as we can verify whether a given set of elements intersects each set in the family in time polynomial in the input size. Since \dts is \NP-hard, \dhs can be reduced to \dts in polynomial time. 
Hence we have the following theorem.

\begin{theorem}\label{theorem:hs-ts}
\dts admits a polynomial kernel and an FPT algorithm running in time $\Oh^*(c^k)$ where $c$ is a polynomial function of $d$.
\end{theorem}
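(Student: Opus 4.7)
The plan is to stitch together Lemma~\ref{lemma:red-dts-dhs}, Lemma~\ref{lemma:hs-ts-c}, and the \NP-hardness of \dts from Lemma~\ref{lem:dts-hard}. For the fixed-parameter algorithm I would first apply Lemma~\ref{lemma:red-dts-dhs} to transform an instance $(X,\mathcal{S},d,k)$ of \dts in polynomial time into an equivalent instance $(U,\mathcal{F},2d,k)$ of \dhs. This reduction preserves the parameter $k$ exactly and merely doubles the bound on the set sizes, from $d$ to $2d$. Invoking the known $\Oh^*(c_0^{\,k})$ algorithm for \dhs with $c_0=(2d)-1+\Oh(1/(2d))$ then yields a fixed-parameter algorithm for \dts running in time $\Oh^*(c^k)$, where $c$ is a polynomial (indeed linear) function of $d$, as required.

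For the polynomial kernel, I would combine the compression of Lemma~\ref{lemma:hs-ts-c} with a reverse reduction. Lemma~\ref{lemma:hs-ts-c} already produces, in polynomial time, an instance of \dhs of size $\Oh((4d-1)k^{2d-1}+k)$ that is equivalent to the input instance of \dts; but this is only a compression, since the output problem is \dhs and not \dts. To upgrade it into a genuine kernel, I would use that \dhs lies in \NP (verifying that a given set of size at most $k$ meets each $2d$-element set can be done in polynomial time in the instance size), while \dts is \NP-hard by Lemma~\ref{lem:dts-hard}. Hence by the Cook--Levin theorem there exists a polynomial-time many-one reduction from \dhs to \dts. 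Composing the compression of Lemma~\ref{lemma:hs-ts-c} with this reverse reduction converts any \dts instance into an equivalent \dts instance whose size is polynomial in $k$ for every fixed $d$, which is exactly a polynomial kernel.

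The main obstacle, such as it is, is conceptual rather than technical: one must note that the reduction from \dhs to \dts obtained through the \NP-hardness route blows up the instance size only polynomially, so that applying it to an instance whose size is already polynomial in $k$ still gives an instance of size polynomial in $k$. This, however, is immediate from the definition of a polynomial-time many-one reduction. With these two ingredients in place, both claims of Theorem~\ref{theorem:hs-ts} follow at once.
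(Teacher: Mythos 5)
Your proposal is correct and follows essentially the same route as the paper: reduce \dts to \dhs via the symmetric-difference construction, invoke the known kernel and $\Oh^*(c^k)$ algorithm for \dhs, and then convert the resulting compression into a genuine kernel by composing with the polynomial-time reduction from \dhs back to \dts that exists because \dhs is in \NP and \dts is \NP-hard. Your explicit remark that the reverse reduction only blows up the instance size polynomially is a point the paper leaves implicit, but otherwise the two arguments coincide.
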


\subsection{Tracking Set for Set Systems}
\label{subsec:tracking-set-systems}
Although \dhs is {\sc FPT}, the general {\sc hitting set} problem is {\sc W}[2]-hard~\cite{downey}. Thus if we consider the unrestricted version of \tss, it does not help to reduce it to the {\sc hitting set} problem. Hence we consider a different problem for analysis of \tss, which is the \tcpr problem. 

We refer to an instance of the \tss as an $(x,y)$ instance if the size of the universe (element set) is $x$ and the size of the  family is $y$. 

In \tcpr we are given a set of elements $M=\{1,2,\ldots n\}$,  called vertices and a family $\mathcal{T}=\{T_1,T_2,\ldots, T_m\}$ of distinct subsets of $M$ called tests. We say that a test $T$ separates a pair $i,j$ if $|\{i,j\}\cap T|=1$. A subset $\mathcal{T}'$ of $\mathcal{T}$ is called a \tc if for every pair of distinct vertices $i,j \in M$, there exists a test $T \in \mathcal{T}'$ that separates them.
\tcpr requires finding a minimum size \tc if there exists one.

\tcpr is a well studied problem~\cite{DeBontridder2002},~\cite{DeBontridder2003}. It is known that \tcpr is \NP-hard and \APX-hard~\cite{Hall2001}. There exists an $\Oh(\log n)$-approximation algorithm for the problem~\cite{tclogn} and there is no $o(\log n)$-approximation algorithm unless $P=NP$~\cite{Hall2001}.
The parameterized complexity of \tcpr has also been studied extensively~\cite{GUTIN2013123},~\cite{Crowston2012},~\cite{crowston2016parameterizations}. 
Given $(M,\mathcal{T})$, and $k \in {\mathbb N}\cup \{0\}$, the parameterized version of \tcpr asks if there exists a \tc of size at most $k$.

For $n$ elements and a family of $m$ tests, $\lg n$ is a lower bound for the size of \tc (Theorem~\ref{testcovertheorem}$(i)$), and, $n$ and $m$ are upper bounds for the size of \tc~\cite{bondy},~\cite{crowston2016parameterizations}. 
Given lower and upper bounds of solution size, it is a natural question to ask if there exists an \FPT algorithm on a parameter $k$ which determines whether there exists a solution of size $k$ greater than the lower bound or $k$ less than the upper bound. Parameterizations of \NP-optimization problems above or below their guaranteed lower/upper bounds are well studied parameterizations~\cite{above},~\cite{abovebelow},~\cite{Krithika-above-guarantee}. 

Some results by Crowston et al.~\cite{Crowston2012} on \tcpr have been summarized in the following theorem.
\begin{theorem}~\cite{Crowston2012}
\label{testcovertheorem}
For an $(n,m)$-\tc instance, 
\begin{enumerate}[(i)]
\item
There does not exist a \tc of size less than $\lceil\lg n\rceil$. Hence 
\tcpr has a kernel of size $\mathcal{O}(2^{2^k})$, and is  fixed-parameter tractable when parameterized by solution size $k$ and can be solved in time $\mathcal{O}^*(2^{k2^k})$.
\item
Determining whether there exists a \tc of size at most $(m-k)$ is complete for the parameterized complexity class \W[1].
\item
Determining whether there exists a \tc of size at most $(n-k)$ is fixed-parameter tractable.
\item
Determining whether there exists a \tc of size at most 
$(\lg n + k)$ is hard for the parameterized complexity class \W[2].
\end{enumerate}
\end{theorem}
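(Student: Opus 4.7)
The plan is to handle the four parts of Theorem~\ref{testcovertheorem} separately, since each calls for a different style of argument.

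For part (i), the key is an information-theoretic observation. Given any test cover $\mathcal{T}' = \{T_{i_1}, \ldots, T_{i_k}\}$, associate to each element $j \in M$ the signature $\sigma(j) \in \{0,1\}^k$ whose $\ell$-th coordinate is $1$ if $j \in T_{i_\ell}$ and $0$ otherwise. The defining property of a test cover is exactly that $\sigma$ is injective, so $n \le 2^k$, i.e., $k \ge \lceil \lg n \rceil$. The kernel is then immediate: if $n > 2^k$ we report NO, and otherwise distinct tests correspond to distinct subsets of $M$, yielding $m \le 2^n \le 2^{2^k}$. For the FPT algorithm, after kernelization we enumerate all $k$-subsets of the reduced $\mathcal{T}$ and verify each in polynomial time, giving $\Oh^*\!\left(\binom{2^{2^k}}{k}\right) = \Oh^*(2^{k \cdot 2^k})$.

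For part (iii), the problem asks for a test cover of size at most $n - k$, a parameterization \emph{below} the natural upper bound $n$. I would look for reduction rules that either (a) mark a test as mandatory when its absence makes some pair inseparable, or (b) discard a test whose membership pattern is ``dominated'' (in terms of pair separation) by another. After exhaustively applying such rules, the non-redundant part of the instance should be bounded by a function of $k$, giving a kernel which can then be solved by brute force.

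Parts (ii) and (iv) are \W-hardness results, proved by parameterized reductions. For (ii), the natural target is a reduction from \textsc{Multicolored Clique} or a similar canonical \W[1]-complete problem, arranged so that choosing $k$ colour-class vertices forming a clique corresponds to selecting $k$ tests that may be simultaneously removed from $\mathcal{T}$ while preserving the cover property. For (iv), I would reduce from \textsc{Set Cover} or \textsc{Dominating Set} (both \W[2]-complete): pad the universe so that $\lceil \lg n \rceil$ tests are forced to realize the information-theoretic lower bound exactly, and design the remaining gadgets so that the additional $k$ tests must correspond bijectively to a set cover of size $k$ in the source instance.

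The main obstacle will be the reduction for part (iv). Matching the rounded logarithmic lower bound with an additive $+k$ slack forces the padding to be extremely tight: too loose, and an adversary shaves off tests without ever touching the \W[2]-hard source; too tight, and the reduced parameter may blow up beyond a function of the source parameter. Designing a gadget that forces exactly $\lceil \lg n \rceil$ mandatory tests while leaving the remaining choice space in one-to-one correspondence with set-cover solutions is the delicate combinatorial step, and verifying that the whole reduction runs in FPT time despite the potentially exponential growth of $n$ is the accompanying technical burden.
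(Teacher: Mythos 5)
The paper does not actually prove Theorem~\ref{testcovertheorem}: it is quoted as a known result of Crowston et al.~\cite{Crowston2012} and used as a black box, so there is no in-paper proof to compare against. Judged on its own terms, your proposal fully establishes only part (i). That argument --- the signature map $\sigma\colon M\to\{0,1\}^k$ being injective forces $n\le 2^k$, hence $k\ge\lceil\lg n\rceil$; distinct tests are distinct subsets of $M$, so after discarding duplicates $m\le 2^n\le 2^{2^k}$; and brute-force enumeration of $k$-subsets of the kernel runs in $\Oh^*(2^{k2^k})$ time --- is correct, complete, and is exactly the standard argument from the cited work.

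Parts (ii)--(iv), however, are not proofs but research plans, and they are precisely the nontrivial content of the theorem. For (ii) and (iv) you name plausible source problems (\textsc{Multicolored Clique}, \textsc{Set Cover}/\textsc{Dominating Set}) but construct no gadgets and verify neither direction of any equivalence; you yourself flag the padding difficulty in (iv), and that difficulty remains unresolved in your writeup. Note also that (ii) claims \W[1]-\emph{completeness}, so membership in \W[1] would need an argument as well. For (iii) the gap is more than a missing construction: the proposed mechanism (mandatory-test rules plus discarding pairwise-dominated tests, hoping the residue is bounded by $f(k)$) has no reason to work. Domination is safe for minimization, but the surviving pairwise-incomparable tests, and in particular the ground set $M$ itself, need not shrink to a function of $k$; the parameter here is $n-k$, a below-guarantee parameterization, and the \FPT algorithm in~\cite{Crowston2012} rests on a considerably more delicate combinatorial analysis than domination rules. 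In short: (i) is done, (ii)--(iv) are open in your writeup, and the correct move is either to cite~\cite{Crowston2012} as the paper does or to supply the three missing arguments in full.
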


\tcpr is known to be a \textit{dual} of \tss~\cite{bazgan}, as explained in the following lemmas.

\begin{lemma}
\label{lemma:tc-to-tss}
Let $\{X,\mathcal{S}\}$ where $X=\{1,\cdots,n\}$ and $\mathcal{S}=\{S_1, S_2, ..., S_m\}$, be an instance of \tcpr. Then there exists an instance $\{X',\mathcal{S}'\}$ of \tss where $X'=\{x_1,\cdots,x_m \}$ and $\mathcal{S}'=\{ F_1,\cdots,F_n \}$, $F_i=\{ j \mid i \in S_j \}$ such that, there exists a \tc of size $k$ for $\{X,\mathcal{S}\}$ if and only if there exists a \trs of size $k$ for $\{X',\mathcal{S}'\}$.
\end{lemma}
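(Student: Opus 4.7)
The plan is to exhibit an explicit bijection between subfamilies of $\mathcal{S}$ and subsets of $X'$, and show that this bijection carries the test-cover property exactly onto the tracking-set property. Given any $T\subseteq \mathcal{S}$, say $T=\{S_{j_1},\ldots,S_{j_k}\}$, associate the set $T'=\{x_{j_1},\ldots,x_{j_k}\}\subseteq X'$. This is clearly a size-preserving bijection between $k$-subsets on the two sides, so it suffices to prove that $T$ is a test cover of $\{X,\mathcal{S}\}$ iff $T'$ is a tracking set of $\{X',\mathcal{S}'\}$.

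The heart of the argument is a single identity linking the two notions. Using the definition $F_i=\{\,j\mid i\in S_j\}$, observe that for every $i\in X$,
\[
F_i\cap T' \;=\; \{\,x_j\in T' \mid i\in S_j\,\}.
\]
Thus for distinct $i,i'\in X$, $F_i\cap T'\neq F_{i'}\cap T'$ holds iff there is some index $j$ with $x_j\in T'$ such that exactly one of $i,i'$ belongs to $S_j$, i.e.\ iff the test $S_j\in T$ separates the pair $\{i,i'\}$.

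From this identity both directions fall out immediately. For the forward direction, suppose $T$ is a test cover. Then for every pair of distinct vertices $i,i'\in X$ some $S_j\in T$ separates them, so by the identity the sets $F_i\cap T'$ and $F_{i'}\cap T'$ differ; since the $F_i$ range over all of $\mathcal{S}'$, $T'$ is a tracking set. Conversely, if $T'$ is a tracking set, then for any two distinct vertices $i,i'\in X$ we have $F_i\cap T'\neq F_{i'}\cap T'$, and the identity yields an $S_j\in T$ separating them, so $T$ is a test cover. The sizes agree by construction, completing the proof.

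There is no real obstacle here: the only subtlety is keeping the roles of ``elements'' and ``sets'' straight between the two problems, since the construction swaps them (this is precisely why \tcpr and \tss are duals). Once the identity $F_i\cap T'=\{x_j\in T'\mid i\in S_j\}$ is written down, both implications are bookkeeping.
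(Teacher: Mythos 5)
Your proof is correct and is exactly the standard duality argument the paper has in mind (the paper states this lemma without an explicit proof, attributing the element/set role swap to the cited duality with \tcpr): the identity $F_i\cap T'=\{x_j\in T'\mid i\in S_j\}$ translates ``some chosen test separates $i,i'$'' directly into ``$F_i$ and $F_{i'}$ have different intersections with $T'$.'' The only cosmetic point is that, as in the paper's own statement, $F_i$ should formally be $\{x_j\mid i\in S_j\}\subseteq X'$ rather than a set of indices, and the family $\mathcal{S}'$ must be read as indexed by $i$ (so that vertices lying in exactly the same tests yield an infeasible instance on both sides).
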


\begin{lemma}
\label{lemma:tss-to-tc}
Let $\{X,\mathcal{S}\}$ where $X=\{x_1,\cdots,x_n\}$ and $\mathcal{S}=\{S_1, S_2, ..., S_m\}$, be an instance of \tss. Then there exists an instance $\{X',\mathcal{S}'\}$ of \tcpr where $X'=\{1,\cdots,m \}$ and $\mathcal{S}'=\{ F_1,\cdots,F_n \}$, $F_i=\{ j \mid x_i \in S_j \}$ such that, there exists a \trs of size $k$ for $\{X,\mathcal{S}\}$ if and only if there exists a \tc of size $k$ for $\{X',\mathcal{S}'\}$.
\end{lemma}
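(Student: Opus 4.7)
The plan is to exhibit a natural bijection between candidate tracking sets for $\{X,\mathcal{S}\}$ and candidate test covers for $\{X',\mathcal{S}'\}$, and then verify that the defining conditions translate into each other via the transposed incidence matrix interpretation built into the construction.

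First, I would set up the bijection. For any $T \subseteq X$, define $T' = \{F_i \in \mathcal{S}' \mid x_i \in T\}$; clearly $|T'| = |T|$ since the $F_i$ are distinct indices by construction, and the map is a bijection between subsets of $X$ of size $k$ and subsets of $\mathcal{S}'$ of size $k$. The crux is the symmetric membership identity built into the definition of $\mathcal{S}'$, namely
\[
x_i \in S_j \iff j \in F_i,
\]
which is immediate from $F_i = \{j \mid x_i \in S_j\}$.

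Next I would carry out the equivalence of conditions. Suppose $T$ is a tracking set, and fix any two distinct $j,j' \in X' = \{1,\dots,m\}$. Since $T \cap S_j \neq T \cap S_{j'}$, there exists some $x_i \in T$ lying in exactly one of $S_j, S_{j'}$. Translating via the membership identity, exactly one of $j,j'$ lies in $F_i$, and $F_i \in T'$ by the definition of the bijection; thus $F_i$ separates the pair $\{j,j'\}$, so $T'$ is a test cover. Conversely, if $T'$ is a test cover and $j \neq j'$, some $F_i \in T'$ satisfies $|\{j,j'\} \cap F_i| = 1$, so the corresponding $x_i \in T$ belongs to exactly one of $S_j, S_{j'}$, witnessing $T \cap S_j \neq T \cap S_{j'}$. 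Hence $T$ is a tracking set if and only if $T'$ is a test cover, and in particular a solution of size $k$ exists on one side exactly when it does on the other.

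I do not expect any real obstacle here: the statement is essentially dual to Lemma~\ref{lemma:tc-to-tss} and amounts to observing that the construction simply transposes the element-set incidence relation, so ``distinguishing all pairs of sets by elements'' on one side literally becomes ``separating all pairs of elements by sets'' on the other. The only thing to be careful about is writing the bijection and the membership identity cleanly so that both directions of the biconditional drop out symmetrically.
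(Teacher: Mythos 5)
Your proof is correct and is exactly the standard duality argument that the paper leaves implicit (the lemma is stated without proof, with the transposition of the element--set incidence relation attributed to the known equivalence in~\cite{bazgan}). The bijection $T \mapsto \{F_i \mid x_i \in T\}$ together with the identity $x_i \in S_j \iff j \in F_i$ is precisely the intended reasoning, so your write-up simply fills in the omitted details.
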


From Theorem~\ref{testcovertheorem} and Lemmas~\ref{lemma:tc-to-tss} and \ref{lemma:tss-to-tc}, we have the following corollary.

\begin{corollary}
\label{cor:trs-is-fpt}
For a $(n,m)$-set system the following holds.
\begin{enumerate}[(i)]
\item There does not exist a \trs of size less than $\lceil\lg m\rceil$.
\item \tss has a kernel of size $\mathcal{O}(2^{2^k})$, and is  fixed-parameter tractable when parameterized by solution size and can be solved in time $\mathcal{O}^*(2^{k2^k})$.
\item Finding a \trs of size at most $(n-k)$ is \W[1]-complete.
\item Finding a \trs of size at most $(m-k)$ is \FPT.
\item Finding a \trs of size at most $(\lg m+k)$ is \W[2]-hard.
\end{enumerate}
\end{corollary}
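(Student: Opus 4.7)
The plan is to obtain each part of the corollary as a direct translation of the corresponding part of Theorem~\ref{testcovertheorem}, using the size-preserving dualities between \tss and \tcpr established in Lemmas~\ref{lemma:tc-to-tss} and~\ref{lemma:tss-to-tc}. The single bookkeeping point that underlies everything is that this duality \emph{swaps} the roles of the universe and the family: an $(n,m)$-TSS instance (with $n$ elements and $m$ sets) corresponds under Lemma~\ref{lemma:tss-to-tc} to a \tcpr instance with $m$ vertices and $n$ tests, and the minimum tracking set size of the former equals the minimum test cover size of the latter. Both reduction directions are polynomial-time, so FPT-time, kernelization-time, and parameterized reduction guarantees all transfer across.

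With this dictionary in hand, the five parts follow in one line each. For (i), apply Lemma~\ref{lemma:tss-to-tc} to convert the $(n,m)$-TSS instance to a TC instance whose vertex set has size $m$; then Theorem~\ref{testcovertheorem}(i) forces any test cover, and hence any tracking set, to have size at least $\lceil \lg m \rceil$. For (ii), use the same lemma to move to the dual TC instance, apply Theorem~\ref{testcovertheorem}(i) to obtain a TC kernel of size $\mathcal{O}(2^{2^k})$ and an $\mathcal{O}^*(2^{k2^k})$ decision algorithm, and translate the reduced instance back via Lemma~\ref{lemma:tc-to-tss}; sizes are preserved by both conversions, so the TSS bounds are identical.

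For (iii)--(v), the same translation is applied, but one must be careful to match the quantities $n-k$, $m-k$, and $\lg m + k$ to the correct Test Cover bounds. Under the duality, a TSS with $n$ elements and $m$ sets becomes a TC with $m$ elements and $n$ tests, so the TSS bound $n-k$ matches the TC bound ``$\text{tests}-k$'' (Theorem~\ref{testcovertheorem}(ii), giving \W[1]-completeness for (iii)); the TSS bound $m-k$ matches ``$\text{elements}-k$'' on the TC side (Theorem~\ref{testcovertheorem}(iii), giving FPT for (iv)); and the TSS bound $\lg m + k$ matches ``$\lg(\text{elements})+k$'' on the TC side (Theorem~\ref{testcovertheorem}(iv), giving \W[2]-hardness for (v)). Hardness transfers via Lemma~\ref{lemma:tss-to-tc} and membership (for the \W[1]-completeness in (iii)) via Lemma~\ref{lemma:tc-to-tss}, since both reductions run in polynomial time and preserve the solution size exactly.

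The main ``obstacle'' is not mathematical but notational: in Theorem~\ref{testcovertheorem} the symbols $n$ and $m$ denote vertices and tests of a TC instance, whereas in Corollary~\ref{cor:trs-is-fpt} they denote elements and sets of a TSS instance, and these roles are \emph{interchanged} by the duality. Once one writes down the explicit dictionary $(|X'|, |\mathcal{S}'|) = (m, n)$ from Lemma~\ref{lemma:tss-to-tc}, each part is a verbatim restatement of Theorem~\ref{testcovertheorem}, so no further work is needed.
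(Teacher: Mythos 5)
Your proposal is correct and follows exactly the paper's (implicit) argument: the corollary is presented as an immediate consequence of Theorem~\ref{testcovertheorem} via the size-swapping duality of Lemmas~\ref{lemma:tc-to-tss} and~\ref{lemma:tss-to-tc}, and your dictionary matching each TSS bound ($\lg m$, $n-k$, $m-k$, $\lg m + k$) to its Test Cover counterpart is the right one. The only quibble is that you have the roles of the two lemmas reversed in the last paragraph---hardness of the TSS variant transfers by reducing \emph{from} the hard \tcpr variant, i.e.\ via Lemma~\ref{lemma:tc-to-tss}, and membership via Lemma~\ref{lemma:tss-to-tc}---but since both directions are polynomial and size-preserving this does not affect correctness.
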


Gutin et al.~\cite{GUTIN2013123} have shown that there does not exist a polynomial kernel for \tcpr when parameterized by the solution size, under standard complexity theory assumptions. 
We gave a kernel for a special case of \tss in the previous subsection for the special case of \dts.

Due to Theorem~\ref{theorem:hs-ts} and Lemma~\ref{lemma:tc-to-tss}, we have the following corollary.

\begin{corollary}\label{cor:hs-tc}
For an instance of \tcpr where each element in the universe appears in at most $d$ sets in the family, there exists a polynomial kernel and an FPT algorithm running in time $\Oh^*(c^k)$ where $c$ is a function polynomial in $d$ and $k$ is the size of desired solution.
\end{corollary}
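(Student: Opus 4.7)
The plan is to chase the dualities that have already been established and show that the ``element‐frequency bounded by $d$'' restriction on \tcpr is exactly the image, under the correspondence of Lemmas~\ref{lemma:tc-to-tss} and~\ref{lemma:tss-to-tc}, of the ``set‐size bounded by $d$'' restriction on \tss. Once this is verified, the corollary will follow mechanically from Theorem~\ref{theorem:hs-ts}.

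First I would take an instance $(X,\mathcal{S})$ of \tcpr with $X=\{1,\dots,n\}$, $\mathcal{S}=\{S_1,\dots,S_m\}$, and the promise that every $i\in X$ lies in at most $d$ sets of $\mathcal{S}$. Apply the construction of Lemma~\ref{lemma:tc-to-tss}: build the dual set system $(X',\mathcal{S}')$ with $X'=\{x_1,\dots,x_m\}$ and $\mathcal{S}'=\{F_1,\dots,F_n\}$ where $F_i=\{j : i\in S_j\}$. The key observation is that $|F_i|$ is exactly the number of sets in $\mathcal{S}$ containing $i$, so the frequency bound on \tcpr translates verbatim to the bound $|F_i|\le d$ for every $F_i\in\mathcal{S}'$. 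Thus $(X',\mathcal{S}',d,k)$ is a legitimate \dts instance, and by Lemma~\ref{lemma:tc-to-tss} it has a tracking set of size $k$ iff $(X,\mathcal{S})$ has a test cover of size $k$.

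For the FPT claim I would simply invoke Theorem~\ref{theorem:hs-ts} on the \dts instance, obtaining an algorithm running in time $\Oh^*(c^k)$ with $c$ polynomial in $d$; the conversion step is polynomial, so the overall running time is $\Oh^*(c^k)$ on the original \tcpr instance. For the polynomial kernel I would run the \dts kernelization from Theorem~\ref{theorem:hs-ts} to shrink $(X',\mathcal{S}',d,k)$ to an equivalent \dts instance of size polynomial in $k$ (for fixed $d$), and then apply the reverse transformation of Lemma~\ref{lemma:tss-to-tc} to transport the kernel back to a \tcpr instance. The reverse transformation again swaps ``set'' and ``element,'' so the fact that the kernelized \dts instance has all sets of size $\le d$ guarantees that the resulting \tcpr instance still has element frequency $\le d$, preserving the input restriction. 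Its size is polynomial in the \dts kernel size, hence polynomial in $k$.

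The only place that needs a little care — and what I would treat as the main point to verify explicitly in the write‑up — is this frequency/size bookkeeping under the two dual constructions: the inequality $|F_i|\le d$ coming from the \tcpr side, and the symmetric inequality on element frequencies after mapping the kernel back through Lemma~\ref{lemma:tss-to-tc}. Everything else is an application of previously established results.
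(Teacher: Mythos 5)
Your proposal is correct and follows essentially the same route as the paper, which simply combines Theorem~\ref{theorem:hs-ts} with the duality of Lemma~\ref{lemma:tc-to-tss} after observing that the frequency bound on the \tcpr side becomes the set-size bound on the \dts side. Your explicit back-translation of the kernel through Lemma~\ref{lemma:tss-to-tc} (checking that set sizes map back to element frequencies) is a slightly more careful justification of the kernel claim than the paper's one-line citation, but it is the same argument in substance.
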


\section{Tracking Set for Paths in Graphs}
\label{sec:tssp}
In this section, we provide \FPT algorithms for \tsp problems in graphs.
In \tsp, the input is a graph $G$ with a unique source $s\in V(G)$ and a unique destination $t\in V(G)$, and the required output is a \trs, $T\subseteq V$, whose intersection with the vertex set of each \stpath is unique. 
The first problem we consider is \ddtsp where the input graph has diameter $d$, and then we tackle the general \tsp problem.

\subsection{Tracking Shortest Paths in diameter $d$ graphs}
\label{subsec:diam-d-tp}

In this section we give a kernel and an \FPT algorithm for a special case of \tsp where we consider those graphs whose diameter is at most $d$. 


\ddtsp involves finding a \trs i.e. a subset of vertices from $V(G)$, that distinguishes all \sstpaths when the input graph has diameter restricted to $d$. We define the problem formally as follows.


\pprob{\ddtsp}{An \stgraph $G$ with $diam(G)\leq d$, and an integer $k$.}{Does there exist a set $T \subseteq V(G)$ of at most $k$ vertices such that for any two \sstpaths $P_1$ and $P_2$, ${T} \cap V(P_1) \neq {T} \cap V(P_2)$ ?}

We use $(G,d,k)$ to denote an instance of the parameterized version of \ddtsp, where $G$ is a graph with $diam(G)\leq d$, and $k$ is the size of the required tracking set for tracking all \sstpaths in $G$. Observe that for an \stgraph $G=(V,E)$, if $diam(G)=2$, then $G$ consists of $s$ and $t$ being adjacent to the vertices in $V\setminus\{s,t\}$, i.e. all \stpaths in $G$ are \sstpaths, and their length is two. In such a case, all but one vertices in $V\setminus \{s,t\}$ need to be marked as trackers. Further if $dist(s,t)=2$ then $diam(G)=2$ for an \stgraph $G$.

%

Banik et al.~\cite{banik} proved that \tsp is \NP-hard when the length of shortest paths is greater than or equal to three. Note that here the graph diameter is greater than or equal to three and $dis(s,t)\geq 3$. Hence we have the following corollary.

\begin{corollary}
\label{cor:ddtsp-nph}
\ddtsp is \NP-hard when $d\geq 3$ for a fixed $d$.
\end{corollary}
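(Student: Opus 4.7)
The plan is to derive the corollary directly from the Banik et al.~\cite{banik} NP-hardness of \tsp on instances in which every shortest $s$-$t$ path has length at least three. For any such graph $G$ one has $diam(G) \geq dis(s,t) \geq 3$, so the produced instances automatically satisfy the lower bound $diam(G) \geq 3$; the only remaining task is to complement this with a constant upper bound on the diameter so that the reduction lands inside \ddtsp for some specific fixed $d \geq 3$.

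First I would reopen the gadget construction in~\cite{banik} and read off the concrete value $d_0$ of the diameter it produces. In a standard reduction of this flavour the graph is assembled from a constant number of constant-height gadgets stacked between the source $s$ and the sink $t$, and every vertex lies on some $s$-$t$ path; this forces $dis(s,t) = diam(G) = d_0$ for a fixed constant $d_0 \geq 3$ that does not depend on the size of the reduced-from instance (e.g.\ the 3-SAT formula). With $d_0$ in hand, the Banik et al.\ reduction is already a polynomial-time many-one reduction into \ddtsp for the parameter value $d_0$, giving NP-hardness of \ddtsp at $d = d_0$.

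To conclude for any fixed $d \geq d_0$, I would observe that the inclusion of input classes goes the right way: every graph with $diam(G) \leq d_0$ is also a legal input when the allowed diameter is relaxed to $d \geq d_0$, so the same reduction works unchanged. The main obstacle I anticipate is precisely the inspection step in the middle paragraph: verifying that the diameter in the Banik et al.\ construction is a fixed constant rather than a quantity growing with the input. If growth does occur, a small surgery — for instance identifying vertices that lie at equal distance from both $s$ and $t$ in a single layer, or adding shortcut edges that do not create any new shortest $s$-$t$ path (so that the family of shortest paths and their pairwise vertex intersections are unchanged) — should suffice to force the diameter down to a target constant, at the cost of a somewhat more careful bookkeeping of which new paths, if any, the surgery introduces. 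Modulo this verification the corollary is immediate.
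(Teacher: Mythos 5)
Your approach is essentially the paper's: the corollary is deduced directly from Banik et al.'s NP-hardness of \tsp on instances whose shortest $s$-$t$ paths have length at least three, combined with the observation that those instances have small (constant) diameter, so they are valid inputs to \ddtsp for every fixed $d\geq 3$. The verification you flag as the ``main obstacle'' --- that the constructed graphs have diameter bounded by a fixed constant rather than one growing with the input --- is real, but the paper relies on exactly the same fact and states it even more tersely (it only remarks that the diameter and $dis(s,t)$ are at least three), so your proposal matches the intended argument.
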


%
%

Next we give a polynomial kernel and FPT algorithm for \ddtsp by reducing it to \dts. We start by giving the following reduction rule that ensures that each vertex and edge in the input graph participates in a \sstpath.

\begin{Reduction Rule}
\label{rule1}
If there exists a vertex or an edge in $G$ that does not participate in any \sstpath, delete it.
\end{Reduction Rule}

\begin{lemma}\label{lem:rule1}
Reduction Rule~\ref{rule1} is safe and can be applied in polynomial time.
\end{lemma}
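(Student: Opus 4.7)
My plan is to handle safety and the polynomial-time bound separately. For safety, the key observation is that if $x$ is a vertex or an edge of $G$ lying on no shortest $s$-$t$ path, then the families of shortest $s$-$t$ paths in $G$ and in $G - x$ coincide. One direction is immediate from the hypothesis: every shortest $s$-$t$ path in $G$ avoids $x$, so it is also a shortest $s$-$t$ path in $G - x$. For the reverse direction, I note that deleting $x$ can only increase the $s$-$t$ distance, and since at least one shortest $s$-$t$ path of $G$ survives intact in $G - x$, we have $\mathrm{dis}_G(s,t) = \mathrm{dis}_{G-x}(s,t)$, so every shortest $s$-$t$ path of $G - x$ is already a shortest $s$-$t$ path of $G$.

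Once the two families of shortest $s$-$t$ paths are identified, it follows immediately that a set $T \subseteq V(G) \setminus \{x\}$ is a tracking set in $G$ iff it is a tracking set in $G - x$, because the multiset of intersections $\{T \cap V(P) : P \text{ shortest } s\text{-}t \text{ path}\}$ is the same in both graphs. In the vertex case, I still have to rule out that an optimal tracking set for $G$ is forced to contain $x$. But since $x$ lies on no shortest $s$-$t$ path, $T \cap V(P) = (T \setminus \{x\}) \cap V(P)$ for every shortest $s$-$t$ path $P$, so $x$ contributes nothing to distinguishing any pair of paths and can be dropped from any tracking set without increasing its size. This closes the equivalence: $G$ has a tracking set of size at most $k$ iff $G - x$ does.

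For the polynomial running time, I will compute $\mathrm{dis}(s,\cdot)$ and $\mathrm{dis}(\cdot,t)$ using two BFS traversals in $O(|V| + |E|)$ time, and then, in constant time per object, check each vertex $v$ against $\mathrm{dis}(s,v) + \mathrm{dis}(v,t) = \mathrm{dis}(s,t)$ and each edge $uv$ against $\mathrm{dis}(s,u) + 1 + \mathrm{dis}(v,t) = \mathrm{dis}(s,t)$ or the symmetric condition. Objects failing the test are removed; the safety argument guarantees that shortest $s$-$t$ distances do not change under these deletions, so a single pass suffices and no iteration over the BFS is required. The step requiring the most care is the safety argument, and in particular the observation that an optimal tracking set in $G$ need never use $x$; everything else, including the BFS-based implementation, is standard.
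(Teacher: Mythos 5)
Your proof is correct and follows essentially the same approach as the paper: both establish safety by noting that an object on no shortest $s$-$t$ path cannot affect the intersection of any tracking set with any shortest $s$-$t$ path, and both implement the rule via the BFS distance test $dis(s,a)+1+dis(b,t)=dis(s,t)$ on edges. Your write-up is somewhat more careful than the paper's (it explicitly verifies that the family of shortest $s$-$t$ paths is preserved, that an optimal tracking set never needs the deleted vertex, and that one pass suffices), and your $\Oh(n+m)$ implementation via two BFS traversals improves on the paper's stated $\Oh(m(n+m))$ bound, though both are polynomial as required.
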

\begin{proof}
Let $G=(V,E)$ be a graph, where $|V|=n$ and $|E|=m$. If a vertex or an edge does not participate in any \sstpath in $G$, it cannot play a role in tracking shortest \stpaths in $G$. To implement the rule, we first find the distance between $s$ and $t$ in $G$, using a breadth first search (BFS). Let $l$ be the length of a shortest \stpath in $G$. Now for each edge $e=ab\in E$, we check if,

$dis(s,a) + dis(b,t) + 1 = l$ or $dis(s,b) + dis(a,t) + 1 = l$.

\noindent
If above condition is not satisfied, then we remove the edge $e$ from $G$. This step takes $\Oh(m(n+m))$ time.
After above step, we also remove all isolated vertices from $G$, in $\Oh(n)$ time. 
\end{proof}

The main challenge in solving \ddtsp using \dts is that while the sets that need to be distinguished are received as a part of the input in \tss, the \sstpaths, which need to be distinguished are implicit in the input for \tsp. Thus, we need a procedure to procure the family of \sstpaths from $G$ for solving \tsp. 

Although for a general graph, counting the number of \stpaths is hard for the complexity class $\#P$~\cite{valiant1979complexity}, for some special class of graphs it can be done in polynomial time. Particularly counting \sstpaths for a graph can be done in polynomial time $\Oh(m+n)$ as explained below.

In order to construct the set system $(U,\mathcal{F})$ we first define \textit{level} $L(v)$ of a vertex $v\in V(G)$ as the length of the shortest path from $s$ to $v$.  
After the application of Reduction Rule~\ref{rule1}, there does not exist an edge between vertices equidistant from $s$ (or $t$). In fact, the endpoints of each edge are such that the difference between their distances from $s$ (or $t$) is always exactly one. Thus the vertices of the graph can be categorized into layers, such that each layer consists of the vertices equidistant from $s$ (or $t$). 
Such a graph is called a \textit{layered $s$-$t$ graph}. Next we have the following observation that helps to enumerate all \sstpaths in a layered $s$-$t$ graph. $V_\ell$ is used to denote the set of vertices at level $\ell$. For a path $P$, we use $P\cdot\{v\}$ to denote a path formed by concatenating the path $P$ with vertex $v$, given that $v$ is a neighbor of one of the end points of $P$.

\begin{Observation}
\label{obs:enum-sstpaths}
For a vertex $v\in G$, let $\mathcal{P}(s,v)$ be the set of shortest paths from $s$ to $v$. If $v\in V_\ell$, then $\mathcal{P}(s,v)=\bigcup_{u\in N(v)\cap V_{\ell-1}} \{ P\cdot\{v\} \mid P\in\mathcal{P}(s,u)\}$ is the set of shortest paths from $s$ to $v$.
\end{Observation}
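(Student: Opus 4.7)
The plan is to establish this set equality by double inclusion, exploiting the layered structure of $G$ guaranteed by Reduction Rule~\ref{rule1}. The prerequisite observation I would state first is that, since every edge of $G$ now lies on some \sstpath, the two endpoints of any edge must have levels differing by exactly one; in particular, whenever $v\in V_\ell$, every neighbor of $v$ lies in $V_{\ell-1}\cup V_{\ell+1}$. Both inclusions below rest on this fact together with the triangle inequality $L(v)\le L(u)+1$ for adjacent $u,v$.

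For the inclusion $\supseteq$, I would fix $u\in N(v)\cap V_{\ell-1}$ and any $P\in\mathcal{P}(s,u)$. Then $|P|=\ell-1$, and because the vertices of $P$ all lie in layers $V_0,\dots,V_{\ell-1}$ while $v\in V_\ell$, the vertex $v$ does not already occur on $P$, so $P\cdot\{v\}$ is a simple $s$-$v$ path of length $\ell=L(v)$, hence a shortest $s$-$v$ path. For the converse $\subseteq$, I would take any $Q\in\mathcal{P}(s,v)$, let $u$ be the vertex of $Q$ immediately preceding $v$, and let $P$ be the prefix of $Q$ ending at $u$; then $|P|=\ell-1$, so $L(u)\le\ell-1$, and combined with $L(u)\ge L(v)-1=\ell-1$ (triangle inequality) this forces $L(u)=\ell-1$. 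Hence $u\in N(v)\cap V_{\ell-1}$ and $P$ is itself a shortest $s$-$u$ path, which realises $Q$ as an element of the right-hand side.

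There is essentially no hard part to this observation; it is a standard BFS-layering property. The one subtlety worth making explicit is that the predecessor of $v$ on a shortest $s$-$v$ path must lie in $V_{\ell-1}$, rather than in $V_\ell$ or $V_{\ell+1}$, which is exactly what Reduction Rule~\ref{rule1} together with the triangle inequality delivers. This is also what makes the recurrence algorithmically useful: assembling $\mathcal{P}(s,v)$ from the previously computed families $\mathcal{P}(s,u)$ for $u$ in the preceding layer yields a single sweep through the layers in increasing order of $\ell$, which supports the $\Oh(n+m)$-time enumeration (or counting) of all \sstpaths claimed in the surrounding text.
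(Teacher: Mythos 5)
Your proof is correct and matches the paper's treatment: the paper states this observation without a formal proof, relying on exactly the layered-graph structure (every edge joins consecutive levels after Reduction Rule~\ref{rule1}) that your double-inclusion argument makes explicit. Both directions of your argument are sound, including the one genuinely non-trivial point that the predecessor of $v$ on a shortest $s$-$v$ path must lie in $V_{\ell-1}$.
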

The above observation can be used iteratively, starting from $s$ and going level by level till $t$, to enumerate all \sstpaths in $G$. Hence we have the following lemma.

\begin{lemma}\label{lemma:count-sstp}
The family $\mathcal{F}$ of the vertex set of all \sstpaths can be enumerated in $\Oh(|\mathcal{F}|(m+n))$ time with a polynomial delay.
\end{lemma}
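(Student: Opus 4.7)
The plan is to turn Observation~\ref{obs:enum-sstpaths} into a concrete DFS-style enumeration procedure on the layered \stgraph and to analyze its per-path cost and total cost carefully.

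First, I would preprocess using Reduction Rule~\ref{rule1}, which by Lemma~\ref{lem:rule1} can be done in polynomial time and leaves us with a graph in which every edge and vertex lies on some \sstpath; as noted in the text, this yields a layered \stgraph in which every edge $uv$ satisfies $|L(u)-L(v)|=1$. A single BFS from $s$ computes the level function $L(\cdot)$ in $\Oh(n+m)$ time, and during this BFS I would also store, for every vertex $v$, its list of \emph{forward neighbors} $N^{\rightarrow}(v):=\{u\in N(v)\mid L(u)=L(v)+1\}$ as a static array so each such list can be scanned in constant amortized time per element.

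The enumeration itself would be a depth-first traversal from $s$ that maintains the current partial \sstpath $P$ as a stack and, at each vertex $v$ on the top of the stack, iterates through $N^{\rightarrow}(v)$. Whenever the stack tip reaches $t$, the vertex set $V(P)$ is reported as a member of $\mathcal{F}$; otherwise the next forward neighbor is pushed. On backtracking, we simply pop and advance the iterator of the new tip. Because every maximal root-to-leaf branch of the recursion ends at $t$ (every forward edge leaves a vertex that, having survived Rule~\ref{rule1}, lies on some shortest path to $t$), no dead-ends occur and no branch is wasted.

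To bound the delay, I would observe that between two consecutive outputs the algorithm performs only: a sequence of pops (each $\Oh(1)$), advancing one iterator, and then a sequence of pushes that rebuilds a new path of length at most $L(t)\le d\le n$. Each push does $\Oh(1)$ work and constructing the output set $V(P)$ at $t$ takes $\Oh(n)$ time, giving delay $\Oh(n+m)$ between successive paths and an initial delay of the same order before the first path. Summing over the $|\mathcal{F}|$ outputs yields the claimed $\Oh(|\mathcal{F}|(n+m))$ bound.

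The only delicate point will be convincing the reader that no computational effort is spent on branches that do not produce an output, i.e.\ that the DFS really has polynomial delay rather than merely polynomial amortized cost per path. This is exactly where Reduction Rule~\ref{rule1} is essential: because every remaining edge lies on a \sstpath, every forward neighbor $u$ of a vertex $v$ on a current shortest prefix to $v$ extends to at least one shortest path from $s$ to $t$, so each recursive call eventually yields a new element of $\mathcal{F}$ before control returns past it. Making this invariant precise (for instance by an induction on the layers of $G$) is the main obstacle and will justify the per-path cost, after which the overall running time bound follows immediately.
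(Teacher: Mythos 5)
Your proposal is correct, and it takes a somewhat different route from the paper. The paper's argument is simply to apply Observation~\ref{obs:enum-sstpaths} iteratively, level by level from $s$ to $t$: for every vertex $v$ at level $\ell$ one materializes the whole set $\mathcal{P}(s,v)$ from the sets $\mathcal{P}(s,u)$ of its neighbors in level $\ell-1$, and the sets $\mathcal{P}(s,u)$ for $u\in N^-(t)$ at the last level constitute $\mathcal{F}$. That is a breadth-first, dynamic-programming style accumulation; the total work is bounded because every partial shortest $s$-$v$ path extends to at least one \sstpath (so the number of partial paths alive at any level is at most $|\mathcal{F}|$), but it produces all of $\mathcal{F}$ essentially at the end, so the ``polynomial delay'' part of the statement is only implicit. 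Your depth-first traversal with backtracking is the variant that genuinely delivers polynomial delay: the key invariant you isolate --- that after Reduction Rule~\ref{rule1} every forward neighbor of the current stack tip extends the current prefix to a full \sstpath, so no branch of the recursion is wasted --- is exactly the right justification, and it follows directly from the fact that every surviving edge $uv$ satisfies $dis(s,u)+1+dis(v,t)=dis(s,t)$. The per-output cost is then a bounded number of pops, one iterator advance, at most $L(t)\le n$ pushes, and $\Oh(n)$ to emit $V(P)$, which gives the claimed $\Oh(|\mathcal{F}|(m+n))$ total. The only caveat worth flagging is that the preprocessing (Reduction Rule~\ref{rule1}) costs $\Oh(m(n+m))$ by Lemma~\ref{lem:rule1}, which is not dominated by $\Oh(|\mathcal{F}|(m+n))$ when $|\mathcal{F}|$ is small; the paper has the same issue and treats the preprocessing as separate, so this does not affect the substance of your argument.
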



Let $(G,d,k)$ be an instance of \ddtsp.
We can create an instance $(X,\mathcal{S},d',k')$ of \dts from $G$ as follows. We introduce an element in $X$ for each vertex in $G$. The vertex set of each \sstpath in $G$ forms a set in the family $\mathcal{S}$. We can construct the family $\mathcal{S}$ using Observation~\ref{obs:enum-sstpaths}. Let $d'=dis(s,t)$. Since $diam(G)=d$, the length of a \sstpath in $G$ will be less than or equal to $d$, i.e. $d'\leq d$. It can be seen that there exists a \trs of size $k$ in $G$ if and only if there exists a \trs of size $k'=k$ for $(X,\mathcal{S},d',k')$. Note that this can be generalized for the case when the graph diameter and the corresponding set sizes in a set system are unbounded.
Hence we have the following lemma.

\begin{lemma}
\label{lemma:ddtsp-dts}
Let $(G,d,k)$ be an instance of \ddtsp. Then there exists an instance $(X,\mathcal{S},d',k)$ of \dts such that $(G,d,k)$ is a YES instance if and only if $(X,\mathcal{S},d',k)$ is a YES instance. Further, if $(G,k)$ is an instance of \tsp, then there exists and equivalent instance $(X,\mathcal{S},k)$ of \tss.
\end{lemma}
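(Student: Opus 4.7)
The plan is to exhibit an explicit construction that turns $(G,d,k)$ into an instance $(X,\mathcal{S},d',k)$ of \dts and then verify both directions of the YES-equivalence under the identification $X = V(G)$. Concretely, I would set $X := V(G)$, let $d' := dis(s,t)$, and take $\mathcal{S}$ to be the family of vertex sets of all \sstpaths in $G$. Reduction Rule~\ref{rule1} ensures $G$ is a layered \stgraph, so every \sstpath has exactly $d'+1$ vertices; hence each member of $\mathcal{S}$ has bounded size (which is what \dts requires). The enumeration itself is carried out level by level via Observation~\ref{obs:enum-sstpaths}, which by Lemma~\ref{lemma:count-sstp} runs with polynomial delay, so $(X,\mathcal{S})$ is well defined.

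For the forward direction, suppose $T \subseteq V(G)$ is a tracking set of size at most $k$ for \sstpaths in $G$. Take any two distinct sets $S_i, S_j \in \mathcal{S}$; by construction they are the vertex sets of two distinct \sstpaths $P_i \neq P_j$, so by the defining property of $T$ we have $T \cap V(P_i) \neq T \cap V(P_j)$, i.e.\ $T \cap S_i \neq T \cap S_j$. Thus $T$, viewed as a subset of $X$, is a tracking set of size at most $k$ for $(X,\mathcal{S})$. For the converse, suppose $T \subseteq X$ is a tracking set of size at most $k$ for $(X,\mathcal{S})$, and let $P_1 \neq P_2$ be two \sstpaths in $G$. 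Since all \sstpaths have the common length $d'$ and the graph is layered, a \sstpath is determined by its vertex set (each level contributes exactly one vertex, and the linear order along the path is forced by the levels), so $V(P_1) \neq V(P_2)$ and these two vertex sets are distinct members $S_1, S_2$ of $\mathcal{S}$. Then $T \cap V(P_1) = T \cap S_1 \neq T \cap S_2 = T \cap V(P_2)$, so $T$ tracks all \sstpaths in $G$.

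For the second part of the statement, exactly the same construction applies verbatim to an arbitrary instance $(G,k)$ of \tsp: set $X := V(G)$ and let $\mathcal{S}$ be the vertex sets of the \sstpaths, with no diameter bound. The equivalence argument above never used the bound $d$, only the bijection between \sstpaths and their vertex sets together with the fact that ``intersection with $T$'' is preserved under the identification $X = V(G)$. Hence $(X,\mathcal{S},k)$ is an equivalent \tss instance.

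The only subtle point — and the step I expect to be the main pitfall — is the claim that distinct \sstpaths yield distinct vertex sets, so that $\mathcal{S}$ is in bijection with the family of \sstpaths rather than merely surjecting onto it; I would justify this carefully using the layered structure of $G$ (post Reduction Rule~\ref{rule1}) and the uniform length $d'$ of \sstpaths. Everything else is bookkeeping: the size bound $|T| \leq k$ is preserved verbatim, and the set-size bound $|S| \leq d'+1 \leq d+1$ places the output in the \dts regime.
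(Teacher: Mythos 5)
Your construction ($X=V(G)$, $\mathcal{S}$ the vertex sets of all \sstpaths, $d'=dis(s,t)$) is exactly the one the paper gives in the paragraph preceding the lemma, and your verification of both directions---including the key observation that after Reduction Rule~\ref{rule1} the graph is layered, so a \sstpath is determined by its vertex set and $\mathcal{S}$ is in bijection with the family of \sstpaths---is correct and in fact more careful than the paper's ``it can be seen that'' justification. The only cosmetic discrepancy is the off-by-one between the path length $d'$ and the set size $d'+1$, which the paper glosses over as well and which affects nothing downstream.
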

%

If the time taken to enumerate all \sstpaths is \FPT then the reduction from \ddtsp to \dts (or \tsp to \tss) can be done in \FPT time as stated in Lemma~\ref{lemma:count-sstp}. Hence we have the following corollary.

\begin{corollary}
\label{cor:dtsp-fpt}
$(i)$ \ddtsp admits a polynomial kernel and an FPT algorithm running in time $\Oh^*(c^k)$ where $c$ is a function of $d$. 

$(ii)$ \tsp admits a kernel of size $\Oh(2^{2^k})$ and can be solved with an FPT algorithm running in time $\Oh^*(2^{k2^k})$.
\end{corollary}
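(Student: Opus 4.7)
The plan is to combine the reduction from \ddtsp to \dts (and from \tsp to \tss) given by Lemma~\ref{lemma:ddtsp-dts} with the kernelization and FPT algorithms of Theorem~\ref{theorem:hs-ts} and Corollary~\ref{cor:trs-is-fpt}. The first step would be to apply Reduction Rule~\ref{rule1} so that every surviving vertex and edge lies on a \sstpath and the graph becomes layered, which in turn enables the level-by-level enumeration scheme of Observation~\ref{obs:enum-sstpaths} for producing the family $\mathcal{F}$ of vertex sets of all \sstpaths. Under the diameter bound $d$, each set in $\mathcal{F}$ has size at most $d+1$, giving an instance of \dts with parameter $d' \le d$; in the unrestricted case it directly yields an instance of \tss.

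The main obstacle is that Lemma~\ref{lemma:count-sstp} only bounds the enumeration time by $\Oh(|\mathcal{F}|(m+n))$, and $|\mathcal{F}|$ can be exponential in $n$, which would destroy the FPT guarantee. The key trick, supplied by Corollary~\ref{cor:trs-is-fpt}(i), is that any set system with more than $2^k$ sets has no tracking set of size at most $k$, since the lower bound $\lceil \lg m \rceil$ on the size of a \trs would already exceed $k$. I would therefore run the enumeration with a counter that aborts as soon as $2^k + 1$ distinct \sstpaths have been produced; in that case we immediately return a trivial NO instance. Otherwise, the enumeration finishes in time $\Oh(2^k(m+n))$, and we have an equivalent set-system instance whose family has size at most $2^k$ and whose universe is at most $|V(G)|$.

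With a bounded set-system instance in hand, I would then simply invoke the known results. For part~(i), Theorem~\ref{theorem:hs-ts} applied to the resulting \dts instance with $d' \le d$ produces an equivalent instance of size polynomial in $k$ (for fixed $d$) and solves the problem in time $\Oh^*(c^k)$ with $c$ polynomial in $d$; the enumeration overhead of $\Oh^*(2^k)$ is absorbed into $\Oh^*(c^k)$. For part~(ii), Corollary~\ref{cor:trs-is-fpt}(ii) applied to the resulting \tss instance yields a kernel of size $\Oh(2^{2^k})$ and an FPT algorithm of running time $\Oh^*(2^{k2^k})$; once again the $\Oh^*(2^k)$ enumeration cost is dominated. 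The kernel claim is interpreted as a polynomial (respectively doubly-exponential) compression into the corresponding set-system problem, and can be converted to a kernel in the usual sense by composing with the polynomial reduction guaranteed by the \NP-hardness of \ddtsp (Corollary~\ref{cor:ddtsp-nph}) and of \tsp established in~\cite{banik}.
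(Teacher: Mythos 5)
Your proposal is correct and follows essentially the same route as the paper's own proof: apply Reduction Rule~\ref{rule1}, enumerate the vertex sets of shortest $s$-$t$ paths, reject immediately if more than $2^k$ paths are found (by Corollary~\ref{cor:trs-is-fpt}$(i)$), and otherwise invoke Lemma~\ref{lemma:ddtsp-dts} together with Theorem~\ref{theorem:hs-ts} for part $(i)$ and Corollary~\ref{cor:trs-is-fpt} for part $(ii)$. Your explicit early-abort of the enumeration and the remark on converting the compression back into a kernel via \NP-hardness are slightly more careful than the paper's write-up, but they formalize exactly what the paper intends.
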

\begin{proof}
Let $G$ be the graph in the input instance of \ddtsp or \tsp.
We start by applying Reduction Rule~\ref{rule1} to $G$. Using Lemma~\ref{lemma:count-sstp} we can enumerate the vertex sets of all \sstpaths in $G$. If there are more than $2^k$ \sstpaths in $G$, then due to Corollary~\ref{cor:trs-is-fpt}$(i)$, it is a NO instance. Else we proceed as follows.

In the case of \ddtsp, first we use Lemma~\ref{lemma:ddtsp-dts} to reduce it to an equivalent instance of \dts. From Lemma~\ref{lemma:red-dts-dhs}, we know that \dts can be reduced to \dhs. Hence, due to~\cite{dhs-kernel}, we have that \ddtsp admits a polynomial kernel and an FPT algorithm running in time $\Oh^*(c^k)$ where $c$ is a function of $d$.

In case the input is an instance of \tsp, we use Lemma~\ref{lemma:ddtsp-dts} to reduce it to an equivalent instance of \tss. Then we use Corollary~\ref{cor:trs-is-fpt} to give a kernel and FPT algorithm. Hence, \tsp admits a kernel of size $\Oh(2^{2^k})$ and can be solved with an FPT algorithm running in time $\Oh^*(2^{k2^k})$.
\end{proof}

\subsection{Improved FPT algorithm for \tsp}
\label{subsec:improvedfpt}

Here we obtain an improved \FPT algorithm for \tsp. This is done by first reducing \tsp to the problem of tracking all paths in a directed acyclic graph. Some additional preprocessing rules are given for DAGs, that result in a larger lower bound for the number of \stpaths in a graph, thereby giving a smaller upper bound for the size of the vertex set. 

%

Let $(G,k)$ be an instance of \tsp. We start by applying Reduction Rule~\ref{rule1}. This removes all those vertices and edges from $G$ that do not participate in any \sstpath.

Next we reduce \tsp to the problem of tracking all \stpaths in a directed acyclic graph. We formally define the problem as follows.

\prob{\tpdag}{A directed acyclic \stgraph $G=(V,E)$.}{A minimum set of vertices ${T} \subseteq V$, such that for any two distinct \stpaths $P_1$ and $P_2$ in $G$, it holds that ${T} \cap V(P_1) \neq {T} \cap V(P_2)$.}

An instance of the parameterized version of \tpdag is denoted by $(G,k)$, where $G$ is the input graph and $k$ is the size of the desired tracking set for $G$. Note that a pair of paths in $G$ cannot have the same vertex set but different sequence of vertices, as this would create a cycle. Hence in a DAG, in order to identify each \stpath uniquely, it is sufficient for each \stpath to have a unique intersection with a tracking set.
Next we prove that there exists a polynomial time reduction from \tsp to \tpdag.

\begin{lemma}\label{lemma:red-tsp-tpdag}
Let $(G,k)$ be an instance of \tsp. Then there exists an instance $(G',k)$ of \tpdag such that $(G,k)$ is a YES instance if and only if $(G',k)$ is a YES instance. 
\end{lemma}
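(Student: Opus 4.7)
The plan is to show that after applying Reduction Rule~\ref{rule1}, the undirected graph $G$ has a natural acyclic orientation that preserves exactly the \sstpaths as directed \stpaths. First I would assume Rule~\ref{rule1} has been applied, so every vertex and edge of $G$ lies on some \sstpath. Define the level function $L(v)=dis(s,v)$. The key structural claim is that for every edge $uv\in E(G)$ we have $|L(u)-L(v)|=1$: by the triangle inequality $|L(u)-L(v)|\leq 1$, and if $L(u)=L(v)=\ell$ for some edge $uv$ lying on a shortest path through $uv$, then $s\to\cdots\to u\to v\to\cdots\to t$ has length $\ell+1+dis(v,t)=dis(s,t)$, whereas we already have an $s$-$v$ path of length $\ell$, giving a path of length $\ell+dis(v,t)=dis(s,t)-1$, a contradiction.

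Given this, I construct $G'$ on the same vertex set by orienting each edge $uv$ from the endpoint of lower level to the endpoint of higher level. Because levels strictly increase along each arc, $G'$ is a DAG with unique source $s$ (at level $0$) and unique sink $t$ (at level $dis(s,t)$), so $(G',k)$ is a valid instance of \tpdag. I would then establish a bijection between \sstpaths of $G$ and directed \stpaths of $G'$: any \sstpath of $G$ traverses levels $0,1,\dots,dis(s,t)$ in order, so all its edges point forward in the orientation and it survives as a directed \stpath of $G'$; conversely, any directed \stpath of $G'$ crosses exactly $dis(s,t)$ consecutive levels and hence has length $dis(s,t)$, which means it is a \sstpath of $G$. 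This bijection preserves vertex sets.

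From the bijection, $T\subseteq V(G)=V(G')$ has distinct intersections with the vertex sets of all \sstpaths of $G$ if and only if it has distinct intersections with the vertex sets of all directed \stpaths of $G'$. Since $G'$ is a DAG, two directed \stpaths with identical vertex sets would have to coincide, so requiring distinct vertex-set intersections is exactly the \tpdag condition as defined in the excerpt. Consequently $(G,k)$ is a YES instance of \tsp iff $(G',k)$ is a YES instance of \tpdag. The construction clearly runs in polynomial time: Rule~\ref{rule1} is polynomial by Lemma~\ref{lem:rule1}, a single BFS from $s$ computes all levels, and orienting $|E(G)|$ edges is linear.

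I do not expect any real obstacle here. The only point that requires care is justifying $|L(u)-L(v)|=1$ for every surviving edge (equivalently, that the reduced graph is layered); once that is in hand, the bijection between \sstpaths of $G$ and directed \stpaths of $G'$ is immediate, and preservation of the \trs property follows because both problems reduce to asking for distinct vertex-set intersections with the same family of vertex subsets.
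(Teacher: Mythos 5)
Your proposal is correct and follows essentially the same route as the paper: apply Reduction Rule~\ref{rule1}, orient every edge of the (now layered) graph toward $t$, and observe that the shortest $s$-$t$ paths of $G$ coincide with the directed $s$-$t$ paths of the resulting DAG, so tracking sets transfer in both directions. The only difference is that you explicitly prove the layering property ($|L(u)-L(v)|=1$ for every surviving edge), which the paper asserts in the discussion preceding Observation~\ref{obs:enum-sstpaths} rather than inside the proof of the lemma.
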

\begin{proof}
We assume $G$ to be preprocessed by Reduction Rule~\ref{rule1}. We create the graph $G'$ from $G$ as follows. Construct the \stgraph $G'$ by creating a copy of $G$. Next, orient each edge in $G'$ towards the destination $t$. 
Note that now each \sstpath in $G$ is an \stpath in $G'$. Further, each \stpath in $G'$ is a \sstpath in $G$. This holds due to the application of Reduction Rule~\ref{rule1} on $G$. Since the set of \sstpaths in $G$ is same as the set of \stpaths in $G'$, it holds that if there exists a tracking path $T$ of size $k$, for all \sstpaths in $G$, then $T$ is also a tracking set for all \stpaths in $G'$.
\end{proof}

Lemma~\ref{lemma:red-tsp-tpdag} also proves the hardness for the problem of tracking all \stpaths in directed acyclic graphs.

\begin{corollary}
\tpdag is \textsc{NP-}hard.
\end{corollary}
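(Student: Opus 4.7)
The plan is to read this off immediately from Lemma~\ref{lemma:red-tsp-tpdag}. That lemma gives, for any instance $(G,k)$ of \tsp, an instance $(G',k)$ of \tpdag with $(G,k)$ a YES instance iff $(G',k)$ is a YES instance. The construction is simply a single application of Reduction Rule~\ref{rule1} followed by orienting every edge of $G$ toward $t$; by Lemma~\ref{lem:rule1} the reduction rule runs in polynomial time, and orienting edges is clearly polynomial, so the whole transformation is a polynomial-time many-one reduction.

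Next I would cite the NP-hardness of \tsp itself. The paper already recorded this in the introduction (attributed to Banik et al.~\cite{banik}), and in fact Corollary~\ref{cor:ddtsp-nph} restates it for the case $d\geq 3$, which is in particular a special case of the unrestricted \tsp. So \tsp on undirected graphs is NP-hard.

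Chaining the two, since \tsp $\leq_p$ \tpdag and \tsp is NP-hard, \tpdag is NP-hard. Membership in NP is immediate — guess a vertex subset $T$ of size at most $k$, and verify in polynomial time, for instance by enumerating \stpaths level-by-level in the DAG (as in Observation~\ref{obs:enum-sstpaths}) up to $2^k+1$ of them and checking their intersections with $T$ for distinctness; if more than $2^k$ paths appear with equal intersections, reject. (Strictly speaking, only hardness is asserted, so this part is optional.)

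There is no real obstacle: the work was done in Lemma~\ref{lemma:red-tsp-tpdag}, and the corollary is a one-line consequence. The only thing to be a little careful about is to point out explicitly that the map $G\mapsto G'$ in that lemma is polynomial-time computable (not merely a YES/NO equivalence), which is what upgrades the equivalence into a Karp reduction certifying NP-hardness.
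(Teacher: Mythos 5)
Your proposal is correct and follows exactly the paper's route: the paper derives the corollary as an immediate consequence of Lemma~\ref{lemma:red-tsp-tpdag} combined with the NP-hardness of \tsp, which is precisely your chain of reasoning (your added remarks on polynomial-time computability of the map $G\mapsto G'$ are a harmless explicitation of what the paper leaves implicit). The optional NP-membership aside is unnecessary for the statement and, as written, is not a sound polynomial-time verifier (enumerating up to $2^k+1$ paths is not polynomial in the input size; the paper instead verifies via the \tsc in Lemma~\ref{lem:verif}), but since you flag it as optional it does not affect the correctness of the hardness argument.
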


In the remainder of the section, by graph we mean a DAG, and by path we mean a directed path. Now we give a kernel and an FPT algorithm for \tpdag. Note that here our objective is to track all \stpaths in a DAG.
We start by giving a reduction rule that removes all those vertices from a DAG $G$, that do not participate in any \stpath in $G$.

\begin{Reduction Rule}
\label{red:useful-dag}
If there exists a vertex or an edge in $G$ that does not participate in any \stpath, delete it.
\end{Reduction Rule}

\begin{lemma}
\label{lemma:red-useful-dag}
Reduction Rule~\ref{red:useful-dag} is safe and can be implemented in polynomial time.
\end{lemma}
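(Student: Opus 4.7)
The plan is to establish safety first and then give a linear-time implementation using reachability. For safety, I would observe that if a vertex $v$ lies on no \stpath of $G$, then the set of \stpaths in $G$ coincides with the set of \stpaths in $G - v$: every \stpath in $G - v$ is trivially an \stpath in $G$, and no \stpath of $G$ uses $v$ by hypothesis. Consequently, for any $T \subseteq V(G) \setminus \{v\}$ and any \stpath $P$, the intersection $T \cap V(P)$ is identical whether $P$ is viewed in $G$ or in $G - v$, so $T$ is a tracking set for $G$ if and only if it is a tracking set for $G - v$. If $v$ appears in some candidate tracking set $T$ for $G$, then $T \setminus \{v\}$ is still a tracking set (since $v$ belongs to no \stpath, it contributes nothing to any intersection), so the minimum tracking set size is preserved. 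The argument for an edge $e$ is identical: no \stpath uses $e$, so the family of \stpaths is unchanged by deleting $e$.

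For the implementation, I would exploit the DAG structure. Compute $R^+ = \{u \in V(G) : u \text{ is reachable from } s\}$ by a forward BFS from $s$, and $R^- = \{u \in V(G) : t \text{ is reachable from } u\}$ by a BFS from $t$ in the reverse graph; each takes $\Oh(n+m)$ time. A vertex $v$ lies on some \stpath if and only if $v \in R^+ \cap R^-$, and an edge $(u,v)$ lies on some \stpath if and only if $u \in R^+$ and $v \in R^-$ (which forces both endpoints into $R^+ \cap R^-$). Delete every vertex and every edge failing this test in a single sweep; the resulting graph is the fully reduced instance.

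The only subtle point, and the step I would guard against most carefully, is whether one pass actually suffices or whether iterated deletion could uncover fresh useless vertices. It cannot: if $u$ lies on some \stpath $P$ in $G$, and we only delete vertices and edges that lie on no \stpath, then none of them can belong to $V(P)$, so $P$ survives intact and $u$ still lies on an \stpath of the reduced graph. Hence one round of reachability computation identifies exactly the set of items to delete, giving total running time $\Oh(n+m)$ and completing both claims of the lemma.
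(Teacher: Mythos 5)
Your proof is correct, and both halves do the job the lemma requires. The safety argument is the same in substance as the paper's (the paper disposes of it in one sentence; you spell out that deleting a useless vertex or edge leaves the family of {$s$-$t$ paths} unchanged, hence preserves tracking sets in both directions, which is the right justification). The implementation, however, takes a genuinely different route. The paper strips edges into $s$ and out of $t$ and then exhaustively deletes vertices of in-degree or out-degree zero; it must then supply a separate correctness argument (via a topological ordering) to certify that every surviving vertex really lies on an {$s$-$t$ path}. You instead compute $R^+$ (reachable from $s$) and $R^-$ (reaching $t$) by two searches and keep exactly the vertices in $R^+\cap R^-$ and the edges $(u,v)$ with $u\in R^+$, $v\in R^-$. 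This characterization is valid precisely because $G$ is a DAG: an $s$-$u$ path and a $v$-$t$ path cannot share a vertex without closing a cycle through the edge $(u,v)$, so the concatenation is simple. Your approach makes the exactness of a single pass immediate (as you correctly argue, deleting only useless items cannot destroy any surviving {$s$-$t$ path}), whereas the paper's degree-pruning needs the extra reachability argument after the fact. Both run in $\Oh(n+m)$, so the two are interchangeable; yours is arguably the cleaner certification.
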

\begin{proof}
Let $G$ be a DAG. If a vertex or an edge does not participate in any \stpath in $G$, it cannot play a role in tracking \stpaths in $G$. Hence the rule is safe.

In order to implement the rule, we perform the following steps exhaustively:
\begin{enumerate}
\item Delete all incoming edges on the source $s$.

\item Delete all outgoing edges from the destination $t$.

\item For a vertex $v\in V(G)\setminus\{s,t\}$, if $deg^+(v)=0$ or $deg^-(v)=0$, then delete $v$ along with all its incident edges.
\end{enumerate}

Note that after performing the above steps, there exists a path from $s$ to each vertex in $G$, and there exists a path from each vertex in $G$ to $t$. Suppose not. Let $\pi$ be a topological ordering of $G$. Let $x$ be a vertex that is not reachable from $s$. Without loss of generality, let $x$ be the first vertex in $\pi$, such that $x$ is not reachable from $s$. Thus all vertices that appear before $x$ in $\pi$ are reachable from $s$. Since $deg^-(x)\geq 1$, there exists a vertex $y$ such that $y\in N^-(x)$. Since $y$ is an in-neighbor of $x$, $y$ is reachable from $s$. Further, since $(y,x)\in E(G)$, it holds that $x$ is also reachable from $s$. Similarly it can be proven that $t$ is reachable from each vertex in $G$. Note that for a vertex $v\in G$, a path from $s$ to $v$ cannot intersect a path from $v$ to $t$ at any vertex other than $v$, as this would create a cycle, and contradict the fact that $G$ is a DAG. Hence now each vertex and edge in $G$ participates in an \stpath. It can be seen that the total time taken to apply the rule is $\Oh(n+m)$.
\end{proof}

Note that after application of Reduction Rule~\ref{red:useful-dag}, each vertex in a graph except $s$ and $t$ has non zero in-degree and out-degree. Further, if the degree of a vertex is two, then both its out-degree and in-degree are exactly one.
For the remainder of the paper we assume that the graph has been preprocessed using Reduction Rule~\ref{red:useful-dag}.

Next we give a reduction rule that ensures that the degree of $s$ and $t$ is at least two.

\begin{Reduction Rule}
\label{red:deg-one-st}
If $deg(s)=1$ and $u\in N^+(s)$, then delete $s$ and set $s=u$. If $deg(t)=1$ and $v\in N^-(t)$, then delete $t$ and set $t=v$.
\end{Reduction Rule}

\begin{lemma}
\label{lemma:red-deg-one-st}
Reduction Rule~\ref{red:deg-one-st} is safe and can be implemented in polynomial time.
\end{lemma}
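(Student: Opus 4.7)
The plan is to argue safety by using the observation that after Reduction Rule~\ref{red:useful-dag} the in-degree of $s$ is zero (since all incoming edges on $s$ are deleted), so $deg(s)=1$ forces $N^+(s)=\{u\}$ for a unique $u$. Consequently every $s$-$t$ path in $G$ is of the form $s\cdot P$ where $P$ is a $u$-$t$ path in $G':=G-\{s\}$, and conversely every $u$-$t$ path $P$ in $G'$ extends to the $s$-$t$ path $s\cdot P$ in $G$ (the only edge lost by deleting $s$ is $(s,u)$, which is not used inside any $u$-$t$ path). This gives a natural bijection $\Phi$ between the $s$-$t$ paths of $G$ and the $u$-$t$ paths of $G'$, with $V(\Phi(Q))=V(Q)\setminus\{s\}$.

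Next I would use the fact that both $s$ and $u$ lie in \emph{every} $s$-$t$ path of $G$, and are therefore useless for distinguishing. Concretely, for any $T\subseteq V(G)$ and any two distinct $s$-$t$ paths $P_1,P_2$, intersecting with $T\setminus\{s\}$ yields the same inequality pattern as intersecting with $T$. Thus, if $T$ is a tracking set for $G$ with $|T|\le k$, then $T':=T\setminus\{s\}\subseteq V(G')$ has size at most $k$, and using $V(\Phi(P_i))\cap T'=V(P_i)\cap T'$ it immediately follows that $T'$ is a tracking set for $G'$. Conversely, if $T'\subseteq V(G')\subseteq V(G)$ of size at most $k$ tracks all $u$-$t$ paths in $G'$, then for any two distinct $s$-$t$ paths $P_1,P_2$ of $G$ their images $\Phi(P_1)\neq \Phi(P_2)$ satisfy $V(\Phi(P_1))\cap T'\neq V(\Phi(P_2))\cap T'$, and since $s\notin T'$ we conclude $V(P_1)\cap T'\neq V(P_2)\cap T'$. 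The argument for $t$ with $deg(t)=1$ and $v\in N^-(t)$ is entirely symmetric: $v$ and $t$ sit on every $s$-$t$ path, and contracting $t$ onto $v$ yields an equivalent instance.

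The one subtle point — and the only real obstacle worth flagging — is the degenerate case in which $u=t$ (respectively $v=s$). Here Reduction Rule~\ref{red:useful-dag} forces $G$ to consist only of the edge $(s,t)$, so $G$ has a unique $s$-$t$ path and the empty set trivially tracks it; a consistent convention (for instance, returning the trivial YES answer immediately, or forbidding the rule when $u=t$) handles this without affecting correctness. Modulo this case, the rule simply deletes a single vertex together with its unique incident edge and relabels the source (or destination), which is clearly implementable in $\mathcal{O}(1)$ time per application and $\mathcal{O}(n+m)$ over the full exhaustive application, so polynomial-time implementability is immediate.
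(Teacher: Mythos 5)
Your proof is correct and follows the same underlying idea as the paper's (all $s$-$t$ paths pass through both $s$ and its unique neighbour $u$, so $s$ contributes nothing to distinguishing paths and can be deleted with $u$ promoted to source); you simply make the path bijection and the two directions of the equivalence explicit, where the paper states this as a brief observation. Your handling of the degenerate case $u=t$ matches the paper's convention of returning YES when the reduced graph becomes a singleton.
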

\begin{proof}
Observe that if $deg(s)=1$ and $u\in N^+(s)$, then all paths that start at $s$, pass through $u$ and vice-versa. Similarly, if $deg(t)=1$ and $v\in N^-(t)$, then all paths that reach $t$, pass through $v$ and vice-versa. Hence in such a case, it is safe to assign the neighbor of $s$ ($t$) as the source (destination), and delete the original $s$ ($t$). It can be seen that the rule can be applied in constant time.
\end{proof}

If after applying reduction rules, the graph becomes a singleton, then we return a YES. Else, henceforth we assume that the reduced graph is not a singleton.
Next we give a lemma that gives a lower bound for the number of \stpaths in a directed acyclic graph reduced using Reduction Rules~\ref{red:useful-dag} and \ref{red:deg-one-st}.



\begin{lemma}
\label{lemma:paths}
In a graph $G$ reduced by Reduction Rule~\ref{red:useful-dag}, the number of \stpaths is at least $1+\sum\limits_{v\in V\setminus\{t\}} (deg^+(v) -1)$.
\end{lemma}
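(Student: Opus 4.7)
The plan is to prove the lemma by strong induction in reverse topological order, applied to a stronger per-vertex statement. For every vertex $v \in V(G)$, let $b(v)$ denote the number of simple directed $v$-$t$ paths in $G$, and let $R(v)$ denote the set of vertices reachable from $v$ (including $v$ itself). I will establish the invariant
\[
b(v) \;\geq\; 1 + \sum_{w\in R(v)\setminus\{t\}}\bigl(\deg^+(w)-1\bigr).
\]
Reduction Rule~\ref{red:useful-dag} guarantees that every vertex lies on some $s$-$t$ path, so $R(s)=V$; hence specializing the invariant to $v=s$ delivers exactly the claimed bound, because $b(s)$ is the number of $s$-$t$ paths.

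The base case $v=t$ is immediate: $R(t)=\{t\}$, the sum on the right is empty, and $b(t)=1$. For the inductive step with $v\neq t$, I would use the recursion $b(v)=\sum_{u\in N^+(v)} b(u)$. This is valid in a DAG because any $v$-$t$ path is an edge $(v,u)$ followed by a simple $u$-$t$ path $P'$ with $v\notin V(P')$ (otherwise the combination $v\to u\to\cdots\to v$ would be a directed cycle), so the concatenation is automatically a simple path, and different choices of first edge yield disjoint path families. Applying the induction hypothesis to each out-neighbor then gives
\[
b(v) \;\geq\; \deg^+(v) \;+\; \sum_{u\in N^+(v)}\sum_{w\in R(u)\setminus\{t\}}\bigl(\deg^+(w)-1\bigr).
\]

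The heart of the argument is to bound this double sum from below by $\sum_{w\in R(v)\setminus\{v,t\}}(\deg^+(w)-1)$. Since $R(v)\setminus\{v\}=\bigcup_{u\in N^+(v)} R(u)$, every such $w$ appears in at least one $R(u)$; and because Reduction Rule~\ref{red:useful-dag} forces $\deg^+(w)\geq 1$ for every $w\neq t$, each summand $\deg^+(w)-1$ is nonnegative, so any multiple counting of the same $w$ from several out-neighbors of $v$ only helps. Rewriting $\deg^+(v)=1+(\deg^+(v)-1)$ and folding the $(\deg^+(v)-1)$ term into the sum over $R(v)\setminus\{t\}$ then closes the induction. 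The only subtle point is precisely this nonnegativity of $\deg^+(w)-1$, which is exactly where Reduction Rule~\ref{red:useful-dag} is essential; without it, the over-counting in the double sum could not be discarded and the argument would fail.
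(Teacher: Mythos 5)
Your proof is correct, but it takes a genuinely different route from the paper's. The paper inducts on the quantity $p=1+\sum_{v\in V\setminus\{t\}}(\deg^+(v)-1)$ itself: it locates a vertex $x$ of out-degree at least two closest to $t$, deletes the interior of a carefully chosen subpath from $x$ toward $t$ so that $p$ drops by exactly one while the number of \stpaths drops by at least one, and appeals to the induction hypothesis on the smaller graph. You instead prove a stronger per-vertex invariant $b(v)\geq 1+\sum_{w\in R(v)\setminus\{t\}}(\deg^+(w)-1)$ by induction in reverse topological order, using the exact recursion $b(v)=\sum_{u\in N^+(v)}b(u)$ and discarding the over-counting in $\bigcup_{u\in N^+(v)}R(u)=R(v)\setminus\{v\}$ because each summand $\deg^+(w)-1$ is nonnegative after Reduction Rule~\ref{red:useful-dag}. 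Your version buys two things: it avoids the delicate bookkeeping of the surgery argument (one must verify that the reduced property and the degree accounting survive the deletion of $P'_{xt}$), and it uses only Reduction Rule~\ref{red:useful-dag}, exactly as the lemma statement demands, whereas the paper's inductive step additionally invokes Reduction Rule~\ref{red:deg-one-st} to guarantee $\deg^+(s)\geq 2$ and $\deg^-(t)\geq 2$. The paper's approach, on the other hand, makes the ``one extra path per extra out-edge'' intuition visible edge by edge. Both are valid; yours is the tighter fit to the statement as written.
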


\begin{proof}
Let $G$ be a graph preprocessed using Reduction Rule~\ref{red:useful-dag}, i.e. each vertex and edge in $G$ participates in a \stpath.
Let $p=1+\sum\limits_{v\in V\setminus\{t\}} (deg^+(v) -1)$. The proof is by induction on $p$. The base case is when $p=1$. This is possible only when $\sum\limits_{v\in V\setminus\{t\}} (deg^+(v) -1)=0$, which implies that the out-degree of all vertices in $V\setminus\{t\}$ is one, and hence the graph is a single path between $s$ and $t$. Hence the claim holds.

Assume that the claim is true for $p\leq k$, where $k\geq 2$. Consider $p=k+1$, i.e. $1+\sum\limits_{v\in V\setminus\{t\}} (deg^+(v) -1)=k+1$, where $k\geq 2$. 



In the following, we will reduce the value of $p$ by exactly one, and show that the number of \stpaths in the graph is also reduced by at least one. Let $x$ be a vertex closest to $t$, such that $deg^+(x)\geq 2$. Such a vertex exists, since due to Reduction Rule~\ref{red:deg-one-st}, $deg^+(s)\geq 2$.
Due to Reduction Rule~\ref{red:useful-dag}, there exists a directed path, say $P_{xt}$, from $x$ to $t$. Let $v$ be the first vertex in $P_{xt}$ such that $deg^-(v)\geq 2$. Such a vertex exists since due to Reduction Rule~\ref{red:deg-one-st}, $deg^-(t)\geq 2$. Let $P'_{xt}$ be the subpath of $P_{xt}$ lying between vertices $x$ and $v$, excluding the vertices $x$ and $v$. Note that $P'_{xt}$ is either a single edge or a path of degree two vertices. Let $G'$ be the graph obtained after the deletion of $P'_{xt}$. Note that the out-degree of $x$ is reduced by one in $G'$. For each vertex deleted in $P'_{xt}$, the value of $p$ remains unchanged as the reduction in summation of out-degree is accompanied by an equal reduction in the count of vertices. Hence, in $G'$, $p$ is reduced by exactly one, i.e. $p=k$. Further note that, after the deletion of $P'_{xt}$, each vertex and edge in the graph still participates in an \stpath. Hence by induction hypothesis, the claim holds for $p=k$. Observe that the deletion of $P'_{xt}$ reduces the number of \stpaths by at least one. Hence the claim holds for $p=k+1$ as well.
This completes the proof.
\end{proof}

Next we give a reduction rule that helps remove long degree two paths (paths containing only vertices with degree two in the graph) from the input graph.

\begin{Reduction Rule}
\label{red:deg-2-dag}
In a graph $G$, if there exist $x,y,z\in V(G)$, and $(x,y),(y,z)\in E(G)$, and $deg(x)=deg(y)=2$, then delete the vertex $y$ and introduce the edge $(x,z)$ in $G$.
\end{Reduction Rule}

\begin{lemma}
\label{lemma:red-deg-2-dag}
Reduction Rule~\ref{red:deg-2-dag} is safe and can be applied in polynomial time.
\end{lemma}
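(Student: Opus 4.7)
The proof has two parts. For the implementation, we scan the vertices of $G$ in polynomial time to locate a degree-$2$ vertex $y$ whose unique in-neighbor $x$ also has degree $2$. After Reduction Rule~\ref{red:useful-dag}, each non-source, non-sink vertex has both in-degree and out-degree at least one, so $deg(y)=2$ forces $deg^-(y)=deg^+(y)=1$; in particular, the in-neighbor $x$ and out-neighbor $z$ are uniquely determined. Deleting $y$ and inserting $(x,z)$ then takes constant time.

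For safety, let $G'$ denote the resulting graph. The key observation is that any $s$-$t$ path of $G$ through $y$ must enter along $(x,y)$ and exit along $(y,z)$. This sets up a natural bijection $\phi$ between $s$-$t$ paths of $G$ and $s$-$t$ paths of $G'$: a path through $y$ is sent to the path obtained by replacing the subpath $x\to y\to z$ with the new edge $(x,z)$, and a path avoiding $y$ is sent to itself. Crucially, $V(\phi(P))=V(P)\setminus\{y\}$ if $y\in V(P)$ and $V(\phi(P))=V(P)$ otherwise.

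The equivalence ``$G$ has a tracking set of size at most $k$ iff $G'$ does'' is then proved through $\phi$. The direction from $G'$ to $G$ is immediate: any tracking set $T'$ for $G'$ (which necessarily omits $y$, since $y\notin V(G')$) serves as $T=T'$ for $G$, because for every $s$-$t$ path $P$ of $G$ we have $V(P)\cap T=V(\phi(P))\cap T'$. For the reverse direction, given a tracking set $T$ for $G$ of size $k$, set $T'=T$ if $y\notin T$, and $T'=(T\setminus\{y\})\cup\{x\}$ otherwise, so $|T'|\leq k$. A case analysis on whether $P_1',P_2'$ use the new edge $(x,z)$ shows $T'$ distinguishes them, using the fact that when $x\neq s$, an $s$-$t$ path of $G$ visits $x$ iff it visits $y$ (since $x$'s only out-neighbor in $G$ is $y$), so $x$ can faithfully substitute for $y$ in the tracking set.

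The main obstacle is the corner case $x=s$, where $s$ lies on every $s$-$t$ path and thus cannot itself serve as a substitute for $y$ in the substitution scheme above. Resolving this will require leaning on the implicit precondition that $(x,z)\notin E(G)$ (otherwise introducing the new edge would duplicate an existing one and collapse two distinct paths into one) and then exhibiting an alternative witness vertex in $G'$---typically $z$, which under this configuration has $s$ as its only in-neighbor via the freshly inserted edge---that appears in an $s$-$t$ path of $G'$ iff the path uses $(x,z)$, thereby playing the role that $y$ played in $G$.
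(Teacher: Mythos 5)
For the main case ($x\neq s$) your argument is correct and is essentially the paper's: since $\deg(x)=2$ and $x$ is internal, $x$'s unique out-neighbour is $y$ and its in-degree is one, so an \stpath visits $x$ if and only if it visits $y$, and $x$ can replace $y$ as a tracker; your explicit bijection $\phi$ just makes this precise. The paper likewise first rules out a pre-existing edge between $x$ and $z$ (via $(x,z)\in E(G)$ forcing $\deg^-(x)=0$, impossible after Reduction Rule~\ref{red:useful-dag}, and $(z,x)\in E(G)$ creating a cycle), which is exactly the precondition your bijection needs.

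The corner case $x=s$ is where your proposal has a genuine gap. You are right that the substitution argument breaks there (the paper's own proof also silently assumes $x\neq s$: both ``$\deg^-(x)=0$ is impossible'' and ``a path visits $y$ iff it visits $x$'' rely on $\deg^+(x)=1$, which fails when $x=s$ and $\deg(s)=\deg^+(s)=2$). But your sketched repair does not work. First, $z$ does \emph{not} in general have $s$ as its only in-neighbour in $G'$: $z$ retains every in-neighbour it had in $G$ other than $y$, so there can be an \stpath of $G'$ containing $z$ that does not use the edge $(s,z)$, and $z$ cannot play the role $y$ played. Second, the precondition $(x,z)\notin E(G)$ that you lean on is not guaranteed when $x=s$, precisely because the $\deg^-(x)=0$ contradiction is unavailable for the source; if $(s,y),(y,z),(s,z)$ all lie in $E(G)$, the operation merges the two distinct paths $s\!\to\!y\!\to\!z$ and $s\!\to\!z$ into one, the number of \stpaths strictly drops, and the optimum \trs size can strictly decrease (take $z$ followed by a diamond onto $t$: four paths needing two trackers become two paths needing one). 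So the $x=s$ configuration must be excluded from the scope of the rule (as the paper implicitly does), not patched with an alternative witness vertex; as written, your proof does not establish safety in that case.
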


\begin{proof}
Let $(G,k)$ be an instance of \tpdag.
Let $(x,y),(y,z)\in E(G)$ and  $deg(x)=deg(y)=2$.
Consider the possibility when there already exists an edge between $x$ and $z$ in $G$. If $(x,z)\in E(G)$, then $deg^-(x)=0$, which is not possible due to Reduction Rule~\ref{red:useful-dag}. If $(z,x)\in E(G)$, then $x,y,z$ induce a cycle in $G$, which contradicts the fact that $G$ is a DAG. Hence if $deg(x)=deg(y)=2$, and $(x,y),(y,z)\in E(G)$, then there cannot exist an edge between $x$ and $z$ in $G$.

Further, since $y$ participates in an \stpath if and only if $x$ participates in that path, if $y$ needs to be marked as a tracker, $x$ can replace it as a tracker. Hence, the reduction rule is safe.

In order to apply the rule, we consider each vertex $u\in V(G)$. If $deg(u)=2$ and $deg(v)=2$, where $v\in N^+(u)$, then we delete $v$ and introduce an edge between $u$ and $w\in N^+(v)$. This can be done in $\Oh(n+m)$ time. 
\end{proof}

\noindent
After the application of Reduction Rule~\ref{red:deg-2-dag}, we have the following observation.

\begin{Observation}
\label{lem:halfhighdegree}
For a pair of vertices $u,v\in V(G)$ there exists at most one vertex of degree two that is adjacent to both $u$ and $v$. 
\end{Observation}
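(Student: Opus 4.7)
The plan is to prove this observation by contradiction. Suppose, toward a contradiction, that there exist two distinct vertices $y_1, y_2 \in V(G)$ of degree two, each adjacent to both $u$ and $v$, so that $N(y_1) = N(y_2) = \{u, v\}$.

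First I would invoke Reduction Rule \ref{red:useful-dag} to pin down the in/out-structure of each $y_i$. Since every vertex lies on some $s$-$t$ path, each $y_i$ has in-degree at least one and out-degree at least one. Combined with $\deg(y_i) = 2$, this forces $\deg^-(y_i) = \deg^+(y_i) = 1$, so the two neighbors $\{u, v\}$ split cleanly into one predecessor and one successor of $y_i$.

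Next I would fix the orientation of $y_1$ and $y_2$ using the DAG property. For each $i$ we either have $(u, y_i), (y_i, v) \in E(G)$ or $(v, y_i), (y_i, u) \in E(G)$. If $y_1$ and $y_2$ were oriented oppositely, say $(u, y_1), (y_1, v) \in E(G)$ while $(v, y_2), (y_2, u) \in E(G)$, then $u \to y_1 \to v \to y_2 \to u$ would be a directed cycle, contradicting that $G$ is a DAG. Hence the orientations agree, and without loss of generality $(u, y_i), (y_i, v) \in E(G)$ for both $i = 1, 2$, giving two parallel length-two paths from $u$ to $v$ through degree-two midpoints.

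Finally I would close the argument by appealing to Reduction Rule \ref{red:deg-2-dag} together with simplicity of $G$. Short-circuiting $y_1$ would replace the path $u \to y_1 \to v$ with the edge $(u, v)$; a subsequent short-circuit of $y_2$ would then introduce a second copy of $(u, v)$, which is impossible in a simple graph. Hence such a pair $y_1, y_2$ cannot coexist after Reduction Rule \ref{red:deg-2-dag} has been applied exhaustively. The main obstacle I expect is exactly this last step: the rule as written triggers on a directed chain of two degree-two vertices rather than on two parallel degree-two vertices sharing both endpoints, so the contradiction has to be funneled either through a direct simplicity argument or through a mild strengthening of the rule's scope, and care is needed to justify why the "parallel" configuration is just as reducible as the "series" one.
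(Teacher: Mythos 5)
Your first three steps are sound: after Reduction Rule~\ref{red:useful-dag} every degree-two vertex has in-degree and out-degree exactly one, and your acyclicity argument correctly forces $y_1$ and $y_2$ to be oriented the same way, say $(u,y_i),(y_i,v)\in E(G)$ for $i=1,2$. The obstacle you flag in your last step, however, is a genuine gap and cannot be repaired from the rules as stated. Reduction Rule~\ref{red:deg-2-dag} fires only on a \emph{series} configuration, i.e. two \emph{adjacent} degree-two vertices $x,y$ with $(x,y),(y,z)\in E(G)$; in the parallel configuration $u\to y_1\to v$, $u\to y_2\to v$ the vertices $y_1$ and $y_2$ are not adjacent, and if $deg(u)\geq 3$ and $deg(v)\geq 3$ none of the pairs $(u,y_i)$ or $(y_i,v)$ meets the rule's degree hypothesis, so neither $y_i$ is ever short-circuited. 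The simplicity fallback does not rescue the argument either: $G$ itself contains no parallel edge in this configuration; a multi-edge would arise only in the auxiliary contracted graph used in Lemma~\ref{lemma:root-n-paths}, and no stated rule performs that contraction on $G$.

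In fact the statement does not follow from the stated reduction rules at all. Consider the DAG on $\{s,u,y_1,y_2,v,t\}$ with edges $(s,u),(s,t),(u,y_1),(u,y_2),(y_1,v),(y_2,v),(v,t)$: every vertex and edge lies on an \stpath, $deg(s)=deg(t)=2$, and Reduction Rule~\ref{red:deg-2-dag} is inapplicable (the only adjacent pair of degree-two vertices is $s,t$, and $t$ has no out-neighbour), yet $y_1$ and $y_2$ are two degree-two vertices each adjacent to both $u$ and $v$. The paper itself offers no proof of this Observation—it is simply asserted after Reduction Rule~\ref{red:deg-2-dag}—so there is no hidden argument you are missing; closing the gap would require an additional reduction step for the parallel configuration, and note that one cannot simply delete one of $y_1,y_2$, since the two parallel paths are distinct \stpaths that must still be distinguished (any such rule would have to place a tracker and decrement $k$). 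Your diagnosis of where and why the argument breaks is exactly right; the proof as proposed is incomplete.
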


Next we give a lower bound for the number of \stpaths in a DAG reduced under Reduction Rules~\ref{red:useful-dag}, \ref{red:deg-one-st}, and \ref{red:deg-2-dag}. We call such DAGs \textit{reduced DAGs}.

\begin{lemma}
In a reduced DAG $G$ on $n$ vertices there exists at least $n/5$ \stpaths.
\label{lemma:root-n-paths}
\end{lemma}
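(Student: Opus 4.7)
The plan is to apply Lemma~\ref{lemma:paths}, whose right-hand side simplifies to $1 + m - (n-1) = m - n + 2$, where $m = |E(G)|$ and $n = |V(G)|$ (using that $\deg^+(t) = 0$ after Reduction Rule~\ref{red:useful-dag}). So it suffices to establish the edge lower bound $m \geq \tfrac{6n}{5} - 2$. To that end, I partition $V(G)$ into $V_2 = \{v : \deg(v) = 2\}$ and $V_3 = V(G) \setminus V_2$; by Reduction Rules~\ref{red:useful-dag} and~\ref{red:deg-one-st} every vertex has degree at least two, so this partition is well defined.

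Two complementary lower bounds on $m$ will drive the proof. The first is a plain degree-sum bound:
\begin{equation*}
2m \;=\; \sum_{v \in V(G)} \deg(v) \;\geq\; 2|V_2| + 3|V_3| \;=\; 2n + |V_3|,
\end{equation*}
giving $m \geq n + |V_3|/2$. The second uses Reduction Rule~\ref{red:deg-2-dag}: for any edge $(u,v)$ with $\deg(u) = \deg(v) = 2$, the rule's hypothesis with $x=u$, $y=v$, and a successor $z$ of $v$ would still fire unless $v$ has no out-neighbor, forcing $v = t$. Consequently the number $e_{22}$ of edges internal to $V_2$ is bounded by $\deg^-(t) \leq 2$. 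Counting half-edges incident to $V_2$ gives $2|V_2| = 2 e_{22} + e_{32}$, where $e_{32}$ is the number of $V_2$-$V_3$ edges, so $m \geq e_{22} + e_{32} = 2|V_2| - e_{22} \geq 2|V_2| - 2$.

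A short case split on $|V_3|$ against the threshold $2n/5$ then closes the argument. If $|V_3| \geq 2n/5$, the degree-sum bound already gives $m \geq n + n/5 = 6n/5$. Otherwise $|V_3| < 2n/5$, so $|V_2| > 3n/5$, and the second bound gives $m > 6n/5 - 2$. In either case $m \geq 6n/5 - 2$, whence Lemma~\ref{lemma:paths} yields at least $\tfrac{n}{5}$ \stpaths. The main subtlety — and the step where Reduction Rule~\ref{red:deg-2-dag} is truly essential rather than merely cosmetic — is verifying that a $V_2$-$V_2$ edge can only point into $t$; this is the part I would double-check most carefully when writing up, since it is where the explicit requirement of a successor $z$ in the rule's hypothesis is used, and it is what keeps $e_{22}$ bounded by an absolute constant regardless of the graph's size.
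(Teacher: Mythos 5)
Your proof is correct, and it rests on the same skeleton as the paper's: both start from Lemma~\ref{lemma:paths} in the form $p \geq m-n+2$ and both split the vertices into those of degree exactly two and those of degree at least three, exploiting Reduction Rules~\ref{red:useful-dag}--\ref{red:deg-2-dag}. Where you diverge is in how the edge count is finished. The paper passes to the auxiliary graph $G'$ obtained by short-circuiting all degree-two vertices, bounds its edge count $m'$ from both sides ($n_2-2\leq m'$ and $m'\geq 1.5n_3-1$), and chains these with $p\geq m'-n_3+2$ to reach $n\leq 5p-4$. You instead prove the single inequality $m\geq \tfrac{6n}{5}-2$ directly, via two complementary bounds ($m\geq n+|V_3|/2$ from the degree sum, $m\geq 2|V_2|-2$ from the near-absence of $V_2$--$V_2$ edges) and a case split on whether $|V_3|\geq 2n/5$; this avoids the auxiliary graph and the claim that short-circuiting preserves the set of \stpaths, which is a small robustness gain. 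Your identification of where Rule~\ref{red:deg-2-dag} genuinely bites --- a $V_2$--$V_2$ edge $(u,v)$ survives only if $v$ has no successor, i.e.\ $v=t$ --- is exactly right and mirrors the paper's Equation~(1). One small imprecision: writing $e_{22}\leq \deg^-(t)\leq 2$ is not literally valid, since $\deg^-(t)$ can be large when $t\notin V_2$; the correct statement is that $e_{22}=0$ if $t\notin V_2$ and $e_{22}\leq\deg^-(t)=2$ if $t\in V_2$, so $e_{22}\leq 2$ in either case, which is all you use.
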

\begin{proof}
Let $p$ be the number of paths in $G$ and $m$ be the number of edges in $G$. From Lemma~\ref{lemma:paths}, it is known that $p \geq 1 +\sum\limits_{v\in V\setminus\{t\}} (deg^+(v) -1)$. This implies that $p\geq m-n+2$. 
Let $n_2$ be the number of vertices with degree exactly two and $n_3$ be the number of vertices with degree at least three in $G$. Note that due to Reduction Rules~\ref{red:useful-dag} and \ref{red:deg-one-st}, there does not exist a vertex of degree one in $G$. 

Consider a graph $G'$ that is obtained from $G$ by short-circuiting all the vertices with degree two. 
Note that the \stpaths in $G'$ are same as those in $G$, except that some of the paths in $G$ may have additional degree two vertices on them. Thus, the number of \stpaths in $G'$ is also $p$. Let $m'$ be the number of edges in graph $G'$. Observe that $s$ and $t$ are the only vertices that can not be short-circuited if they are of degree two. Further, no two vertices of degree two (except for $s$ and $t$) can be adjacent in $G$, due to Reduction Rule~\ref{red:deg-2-dag}. Hence we have 
\begin{align}
n_2-2\leq m' \label{eq:1}
\end{align}
In graph $G'$, the degree of all vertices other than $s$ and $t$ is at least three, and the degree of $s$ and $t$ is at least two. Thus, 
\begin{align}
m'\geq (3(n_3-2) + 2 +2)/2 = 1.5n_3-1 \label{eq:2}
\end{align}

Since $m=m'+n_2$ and $n=n_3+n_2$, we have
\begin{align}
p\geq m-n+2 = m' - n_3 + 2 \label{eq:3}
\end{align}
From Equations~\ref{eq:2} and \ref{eq:3}, we have
\begin{align}
p\geq n_3/2 + 1 \label{eq:4}
\end{align}
Therefore,
\begin{align*}
n &= n_3 + n_2 \\
&\leq n_3 + m' + 2 \textrm{ (due to Equation~\ref{eq:1})} \\
&\leq p + 2n_3  \textrm{ (due to Equation~\ref{eq:3})} \\
&\leq p + 4(p-1)  \textrm{ (due to Equation~\ref{eq:4})} \\
&= 5p-4.
\end{align*}
Hence $p\geq n/5$.
\end{proof}

Next we have the following observation which helps to count the number of \stpaths in a DAG, similar to Observation~\ref{obs:enum-sstpaths}.

\begin{Observation}
\label{obs:count-paths}
For a vertex $v\in V(G)$, the number of paths from $s$ to $v$ is denoted by $p_{sv}$. The number of paths from $s$ to $v$ is equal to the sum of number of paths from $s$ to each of the in-neighbors of $v$, i.e. $p_{sv}=\sum_{u\in N^-(v)} p_{su}$. Hence the number of \stpaths in $G$ is equal to $\sum_{u\in N^-(t)} p_{su}$.
\end{Observation}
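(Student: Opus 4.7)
The plan is to establish the recurrence $p_{sv}=\sum_{u\in N^-(v)} p_{su}$ by exhibiting a bijection between the set of $s$-$v$ paths and the disjoint union over $u\in N^-(v)$ of the sets of $s$-$u$ paths. Given any simple directed path $P$ from $s$ to $v\neq s$ in $G$, its final edge must be of the form $(u,v)$ for some $u\in N^-(v)$, and deleting this edge yields a simple directed $s$-$u$ path $P'$. This defines a map $\Phi: P\mapsto (u,P')$. Conversely, given any $u\in N^-(v)$ and any simple $s$-$u$ path $P'$, appending the edge $(u,v)$ produces a walk $P'\cdot (u,v)$ from $s$ to $v$; the inverse map is well-defined provided this walk is actually a simple path.

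The one point that needs care is simplicity of the concatenation, and this is exactly where the DAG hypothesis enters. If $v$ already appeared on $P'$, then the segment of $P'$ from $v$ onwards, together with the edge $(u,v)$, would form a directed cycle in $G$, contradicting acyclicity. Hence $v\notin V(P')$, so $P'\cdot (u,v)$ is simple, and $\Phi$ is a bijection. Since the fibres of $\Phi$ over distinct in-neighbours $u\in N^-(v)$ are disjoint (each path $P$ has a unique last edge), summing the cardinalities of the fibres gives
\[
p_{sv} \;=\; \sum_{u\in N^-(v)} p_{su}.
\]

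Finally, applying the identity at $v=t$ yields $p_{st}=\sum_{u\in N^-(t)} p_{su}$, which is the second half of the statement. I anticipate no serious obstacle; the only subtlety is the justification that appending an edge preserves simplicity, which is a one-line appeal to the DAG property. If desired, the recurrence also admits an inductive formulation following any topological ordering of $V(G)$, with base case $p_{ss}=1$ (the trivial empty walk) and $N^-(s)=\emptyset$ after application of Reduction Rule~\ref{red:useful-dag}, but the bijective argument above is self-contained and avoids separately handling the base case.
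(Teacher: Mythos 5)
Your proof is correct: the decomposition of each simple $s$-$v$ path by its unique last edge, together with the appeal to acyclicity to guarantee that appending $(u,v)$ to an $s$-$u$ path preserves simplicity, is exactly the justification the paper implicitly relies on (the paper states this as an Observation without proof). No gaps; the only caveat is that the recurrence still needs the base case $p_{ss}=1$ to ground the recursion, which you correctly note at the end.
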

Observation~\ref{obs:count-paths} gives a recursive algorithm to compute the number of \stpaths in $G$ in $\Oh(m+n)$ time, where $m$ is the number of edges and $n$ is the number of vertices in $G$.

Next we give a condition that helps verify if a set of vertices is a \trs for all \stpaths in a graph, in polynomial time. The condition was introduced in~\cite{caldam2018}, but we re-state it here and show that it holds for DAGs as well. 

\begin{Tracking Set Condition}
\label{tssp-condition}
For a graph $G$, a set of vertices $T\subseteq V(G)$ is said to follow the \tsc if there exists at most one path between any two vertices $u,v \in T \cup \{s,t\}$ in the graph $G(V\setminus (T \setminus \{u,v\}))$.
\end{Tracking Set Condition}

Next we show that \textbf{Tracking Set Condition} is necessary and sufficient for a set of vertices to be a \trs. 

\begin{lemma}
\label{lem:verif}
Let $G=(V,E)$ be a DAG with source $s$ and destination $t$, $s,t\in V$. If each vertex and edge in $G$ participates in an \stpath and $T \subseteq V$ is a set of vertices, then $T$ is a \trs for $G$ if and only if $T$ follows the \tsc.
\end{lemma}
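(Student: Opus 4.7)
The plan is to prove both directions of the equivalence by exploiting the DAG structure to construct (or locate) witness paths.

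For the forward direction (TSC implies tracking set), I will argue by contradiction. Suppose $T$ satisfies the TSC but is not a tracking set, so there exist two distinct $s$-$t$ paths $P_1 \neq P_2$ with $V(P_1) \cap T = V(P_2) \cap T$. Since $G$ is a DAG, any common set of vertices must appear in the same topological order on both paths. Hence the set $\{s,t\} \cup (V(P_1) \cap T) = \{s,t\} \cup (V(P_2) \cap T)$ induces the same sequence $u_0 = s, u_1, \dots, u_k = t$ along both paths. Because $P_1 \neq P_2$, the two paths must disagree on some segment strictly between consecutive $u_{i-1}$ and $u_i$. By construction no other tracker lies on either segment, so these two segments are distinct $u_{i-1}$-$u_i$ paths in $G(V \setminus (T \setminus \{u_{i-1}, u_i\}))$, contradicting the TSC.

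For the backward direction (tracking set implies TSC), assume $T$ is a tracking set but the TSC fails, so for some $u, v \in T \cup \{s,t\}$ there are two distinct paths $Q_1, Q_2$ between them in $G(V \setminus (T \setminus \{u,v\}))$. Because every vertex participates in an $s$-$t$ path, I will fix an $s$-$u$ path $P_{su}$ and a $v$-$t$ path $P_{vt}$ (treating the cases $u = s$ or $v = t$ trivially). The key DAG lemma to verify is that $P_{su}$, $Q_i$, and $P_{vt}$ can only overlap at the declared endpoints $u$ and $v$: any shared internal vertex $w$ would, together with the directed travel $u \leadsto v$ along $Q_i$, close a directed cycle, violating acyclicity. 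Therefore $P_{su} \cdot Q_1 \cdot P_{vt}$ and $P_{su} \cdot Q_2 \cdot P_{vt}$ are two distinct simple $s$-$t$ paths whose intersection with $T$ is identical, namely $(V(P_{su}) \cap T) \cup (\{u,v\} \cap T) \cup (V(P_{vt}) \cap T)$, because $Q_1$ and $Q_2$ contain no tracker other than possibly $u$ and $v$. This contradicts $T$ being a tracking set.

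The main obstacle is the DAG-disjointness argument in the backward direction: I need a clean case analysis to show that the concatenations are simple paths and that the trackers on them match. This hinges on the two facts that (i) $u$ must topologically precede $v$ since a $u$-$v$ path exists in the DAG, and (ii) $Q_1, Q_2$ lie in a graph where all trackers except $u, v$ have been removed, so the tracker sets of the two concatenated $s$-$t$ paths coincide by construction. Everything else — the ordering argument in the forward direction and the handling of the endpoint cases $u \in \{s,t\}$ or $v \in \{s,t\}$ — is then routine.
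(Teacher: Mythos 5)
Your proposal is correct and follows essentially the same route as the paper: concatenating $P_{su}\cdot Q_i\cdot P_{vt}$ and using acyclicity to rule out unwanted intersections in one direction, and splitting two tracker-equivalent $s$-$t$ paths at consecutive common trackers in the other. Your handling of the latter direction is in fact slightly more careful than the paper's, since you explicitly argue (via the common topological order of the shared trackers) that some segment between consecutive trackers must differ, rather than fixing the first two common trackers and asserting the subpaths differ there.
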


\begin{proof}
Let $T\subseteq V$ be a \trs for $G$. We claim that $T$ follows the \tsc. Suppose not. Then there exists two vertices $u,v\in T\cup\{s,t\}$ such that there exists two paths, say $P_1,P_2$, between $u$ and $v$ that do not contain any vertex from $T\setminus \{u,v\}$. Due to Reduction Rule~\ref{red:useful-dag}, each vertex in $G$ participates in a \stpath. Hence there exists a path from $s$ to $u$, say $P_{su}$, and there exists a path from $v$ to $t$, say $P_{vt}$. Note that since $G$ is a directed acyclic graph, $P_{su}$ can intersect with $P_1$ and $P_2$ only at $u$. Similarly, $P_{vt}$ can intersect with $P_1$ and $P_2$ only at $v$. Observe that paths $P_{su}\cdot P_1 \cdot P_{vt}$ and $P_{su}\cdot P_2\cdot P_{vt}$ are two distinct \stpaths that contain the same set of trackers. This contradicts the assumption that $T$ is a \trs for $G$.

Conversely, let $T\subseteq V$ be a set of vertices that follows the \tsc. We claim that $T$ is a \trs for $G$. Suppose not. Then there exists two distinct \stpaths in $G$, say $P_1,P_2$, that contain the same set of trackers. Let $T'=T\cup\{s,t\}$. Let $x$ be the first vertex on $P_1$ such that $x\in V(P_1)\cap V(P_2)\cap T'$. Let $y$ be the first vertex on $P_1$ after $x$, such that $y\in V(P_1)\cap V(P_2)\cap T'$. Since the graph is not a singleton, $P_1$ and $P_2$ share at least two vertices, hence $x,y$ exist. Note that if $P_1$ and $P_2$ are vertex disjoint paths except for vertices $s$ and $t$, then vertices $x$ and $y$ are $s$ and $t$. Let $P_1'$ be the subpath of $P_1$ between vertices $x$ and $y$, and $P_2'$ be the subpath of $P_2$ between vertices $x$ and $y$. Observe that $x,y$ is a pair in $T'$ such that there exists two paths between $x$ and $y$ that do not contain any vertices from $T\setminus\{x,y\}$. This violates the \tsc and thus contradicts the assumption that $T$ follows \tsc.
\end{proof}

Hence for a graph $G=(V,E)$, where $|V|=n$ and $|E|=m$, for a set of vertices $T\subseteq V$, $|T|\leq k$, it can be verified whether $T$ is a \trs for $G$ in $\mathcal{O}(k^2 (m+n))$ time, 
by checking that there is no more than one path between every pair of vertices in $T\cup\{s,t\}$.

\begin{theorem}
\label{theorem:tssp}
Let $(G,k)$ be an instance of \tpdag, where $G$ is a graph on $n$ vertices and $m$ edges. Then there exists an \FPT algorithm running in time $\mathcal{O}(2^{k^2+3k} k^2 (m+n))$ that decides whether $(G,k)$ is a YES instance or not.
\end{theorem}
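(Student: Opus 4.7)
The plan is to combine the structural results already established for reduced DAGs with an exhaustive enumeration of candidate tracking sets, together with the efficient verification procedure of Lemma~\ref{lem:verif}. First, I would preprocess $G$ by exhaustively applying Reduction Rules~\ref{red:useful-dag}, \ref{red:deg-one-st}, and~\ref{red:deg-2-dag}, each of which is safe and runs in time polynomial in $n+m$. If this collapses $G$ to a singleton, return YES; otherwise we now have a reduced DAG $G'$ on $n'\le n$ vertices in which every vertex and edge lies on some \stpath and long degree-two paths have been contracted.

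Next I would establish an upper bound on the kernel size. Because any tracking set $T$ of size at most $k$ assigns each \stpath a distinct subset of $T$, the number of \stpaths in any YES-instance is at most $2^{k}$ (this is the lower-bound statement of Corollary~\ref{cor:trs-is-fpt}$(i)$ applied to the set system whose sets are the vertex sets of the \stpaths, via Lemma~\ref{lemma:ddtsp-dts} generalized to DAGs). On the other hand, Lemma~\ref{lemma:root-n-paths} guarantees that the reduced DAG $G'$ contains at least $n'/5$ \stpaths. Combining these two bounds, either $n'/5 > 2^{k}$, in which case we safely answer NO, or $n' \le 5\cdot 2^{k}$. Counting the number of \stpaths can be done in $O(n'+m')$ via Observation~\ref{obs:count-paths}, so this check is cheap.

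Having bounded the kernel, the algorithm brute-forces over all subsets $T\subseteq V(G')$ of size at most $k$. The number of such subsets is at most
\[
\binom{n'}{k} \le (n')^{k} \le (5\cdot 2^{k})^{k} = 5^{k}\cdot 2^{k^{2}} \le 2^{3k}\cdot 2^{k^{2}} = 2^{k^{2}+3k},
\]
where we used $5 \le 2^{3}$. For each candidate $T$, Lemma~\ref{lem:verif} says that $T$ is a tracking set if and only if it satisfies the Tracking Set Condition, and this can be tested in $O(k^{2}(m+n))$ time by checking for each pair $u,v\in T\cup\{s,t\}$ whether there is more than one path between them in the appropriate residual graph. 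Multiplying the number of candidates by the verification cost yields the claimed overall running time $\mathcal{O}(2^{k^{2}+3k}\,k^{2}(m+n))$.

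The main technical point I expect to justify carefully is the kernel bound: one must be sure that Lemma~\ref{lemma:root-n-paths} applies after all three reduction rules have been exhaustively applied, and that the $2^{k}$ upper bound on the number of \stpaths genuinely transfers from the set-system setting to DAGs (which is immediate because distinct \stpaths yield distinct sets in the set system, hence they must be separated by a tracking set of the same size). Once that is in place, the enumeration-plus-verification scheme is standard and the arithmetic simplification $5^{k}\le 2^{3k}$ gives the stated bound.
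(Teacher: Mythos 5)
Your proposal is correct and follows essentially the same route as the paper: apply the reduction rules, use Lemma~\ref{lemma:root-n-paths} together with the $\lceil\lg p\rceil$ lower bound of Corollary~\ref{cor:trs-is-fpt}$(i)$ to conclude $n\le 5\cdot 2^k$ or answer NO, then enumerate all $k$-subsets and verify each via the \tsc of Lemma~\ref{lem:verif}. Your explicit invocation of Reduction Rule~\ref{red:deg-one-st} (which Lemma~\ref{lemma:root-n-paths} requires but the paper's proof omits to mention) and the spelled-out bound $5^k 2^{k^2}\le 2^{k^2+3k}$ are minor improvements in care, not a different argument.
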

\begin{proof}
We start by applying Reduction Rules~\ref{red:useful-dag} and \ref{red:deg-2-dag}. For convenience, we use $(G,k)$ to denote the reduced instance, and $n$ and $m$ to denote the number of vertices and edges in $G$. Let $p$ be the number of \stpaths in $G$. In order to track $p$ paths, we need at least $\lg(p)$ trackers (follows from Corollary~\ref{cor:trs-is-fpt}$(i)$).
From Lemma~\ref{lemma:root-n-paths}, we know that $p\geq n/5$. Hence $\lg(p) \geq \lg (n/5) $. 
Using Observation~\ref{obs:count-paths}, we find the value of $p$ in $\Oh(m+n)$ time. Next, if $k<\lg(p)$, i.e. $k< \lg (n/5) $, we report that it is a NO instance. Else, $k\geq\lg (n/5)$. Hence $n \leq 5(2^k)$.
Now for each subset of $T\subseteq V$ of size $k$, we verify whether $T$ is a \trs for $G$, using the \tsc in $\mathcal{O}(k^2 (m+n))$ time.
Thus in $\mathcal{O}(2^{k^2+3k}{k^2}(m+n))$ time, we can find a \trs of size at most $k$ if one exists.
\end{proof}

%

\section{Acknowledgement}
We thank the anonymous referees for comments that helped improve the overall presentation of the paper and the bound in Lemma~\ref{lemma:root-n-paths}.

\section{Conclusions} 
\label{sec:conclusion}


In this paper we have studied \trs problems for set systems, \sstpaths in undirected graphs and \stpaths in DAGs.
We gave a polynomial kernel for \tss for the case when size of the sets in the family is restricted to at most $d$. 
The improved kernel and algorithm for \tss in this case also implies corresponding improvements for \tcpr for the case when frequency of appearance of each element is restricted to at most $d$ sets.  

The results for \tss are then used to give an FPT algorithm for \tsp in graphs, and a polynomial kernel for the case when the diameter of the input graph is restricted to $d$. Finally we give an improved algorithm for \tsp by first reducing it to \tpdag and then using some structural properties of DAGs.


Possible directions of further study include exploration of other variants of \tss and obtaining improved FPT algorithms for \tsp in special graph classes.


\bibliography{ref1}

\end{document}